\title{Univalent Monoidal Categories} %
\author{Kobe Wullaert}{Delft University of Technology, The Netherlands \and\url{https://kfwullaert.github.io/}}{K.F.Wullaert@tudelft.nl}{https://orcid.org/0000-0003-4281-2739}{}
\author{Ralph Matthes}{IRIT, Université de Toulouse, CNRS, Toulouse INP, UT3, Toulouse, France \and\url{https://www.irit.fr/~Ralph.Matthes/}}{Ralph.Matthes@irit.fr}{https://orcid.org/0000-0002-7299-2411}{}
\author{Benedikt Ahrens}{Delft University of Technology, The Netherlands \and University of Birmingham, United Kingdom \and \url{https://benediktahrens.gitlab.io} }{B.P.Ahrens@tudelft.nl}{https://orcid.org/0000-0002-6786-4538}{This work was
partially funded by EPSRC under agreement number EP/T000252/1.}%
\authorrunning{K. Wullaert, R. Matthes, and B. Ahrens} %
\keywords{Univalence, Monoidal categories, Rezk completion, Displayed (bi)categories, Proof assistant Coq, UniMath library} %
\newcommand{\Coq}{\href{https://coq.inria.fr}{\nolinkurl{Coq}}\xspace}
\newcommand{\UniMath}{\href{https://github.com/UniMath/UniMath}{\nolinkurl{UniMath}}\xspace}
\newcommand{\shorthash}{6d2d288}
\newcommand{\nolinkcoqident}[1]{\nolinkurl{#1}} %
\newcommand{\coqident}{\begingroup\@makeother\#\@coqident}
\newcommand{\@coqident}[3][]{%
  \ifthenelse{\isempty{#2}}%
  {\nolinkcoqident{#3}}%
  {\ifthenelse{\isempty{#1}}%
  {\href{\coqdocurl{#2}{#3}}{\nolinkcoqident{#3}}}%
  {\href{\coqdocurl{#2}{#3}}{\nolinkcoqident{#1}}}}%
\endgroup}
\newcommand{\coqfile}[2]{%
  \ifthenelse{\isempty{#1}}%
  {\href{\coqdocbaseurl #2.html}{\nolinkcoqident{#2.v}}}%
  {\href{\coqdocbaseurl #1.#2.html}{\nolinkcoqident{#2.v}}}}
\newcommand{\plan}[1]{}
\newcommand{\BA}[1]{}
\newcommand{\RM}[1]{}
\newcommand{\KW}[1]{}
\renewcommand{\plan}[1]{\textcolor{blue}{#1}\PackageWarning{TODO}{TODO: #1}}
\renewcommand{\BA}[1]{\textcolor{orange}{BA: #1}\PackageWarning{TODO}{TODO: #1}}
\renewcommand{\RM}[1]{\textcolor{purple}{RM: #1}\PackageWarning{TODO}{TODO: #1}}
\renewcommand{\KW}[1]{\textcolor{magenta}{KW: #1}\PackageWarning{TODO}{TODO: #1}}
\newcommand{\cfont}[1]{\ensuremath{\mathsf{#1}}}
\newcommand{\RC}{\ensuremath{\cfont{RC}}}
\newcommand{\catfont}[1]{\ensuremath{\mathcal{#1}}}
\newcommand{\C}{\catfont{C}}
\newcommand{\CAT}{\mathbf{Cat}}
\newcommand{\CATU}{\CAT_{\mathit{univ}}}
\newcommand{\CATT}{\CAT_T}
\newcommand{\CATUnit}{\CAT_U}
\newcommand{\CATTU}{\CAT_{TU}}
\newcommand{\CATLU}{\CAT_{LU}}
\newcommand{\CATRU}{\CAT_{RU}}
\newcommand{\CATA}{\CAT_{A}}
\newcommand{\CATUA}{\CAT_{\mathit{UA}}}
\newcommand{\CATP}{\CAT_{P}}
\newcommand{\CATS}{\CAT_{S}}
\newcommand{\CATl}{\CAT_{\ell}}
\newcommand{\restrict}[1]{{#1}\vert_{\mathit{univ}}}
\newcommand{\univCATT}{\restrict{\CATT}}
\newcommand{\univCATUnit}{\restrict{\CATUnit}}
\newcommand{\univCATLU}{\restrict{\CATLU}}
\newcommand{\univCATRU}{\restrict{\CATRU}}
\newcommand{\univCATA}{\restrict{\CATA}}
\newcommand{\univCATUA}{\restrict{\CATUA}}
\newcommand{\univCATP}{\restrict{\CATP}}
\newcommand{\univCATS}{\restrict{\CATS}}
\newcommand{\univCATl}{\restrict{\CATl}}
\newcommand{\MONCAT}{\mathbf{MonCat}}
\newcommand{\MONCATU}{\MONCAT_{\mathit{univ}}}
\newcommand{\MONCATS}{\MONCAT^{\mathit{stg}}}
\newcommand{\MONCATSU}{\MONCATS_{\mathit{univ}}}
\newcommand{\BB}{\catfont{B}}
\newcommand{\CC}{\catfont{C}}
\newcommand{\DD}{\catfont{D}}
\newcommand{\EE}{\catfont{E}}
\newcommand{\HH}{\mathcal{H}}
\newcommand{\ob}[1]{#1_0}
\newcommand{\comp}{\cdot}
\newcommand{\id}[1]{\mathsf{Id}_{#1}}
\newcommand{\mor}[3]{{#1}(#2,#3)}
\newcommand{\lwhisker}{\triangleleft}
\newcommand{\rwhisker}{\triangleright}
\newcommand{\precomp}[1]{#1 \comp (-)}
\newcommand{\tensor}{\otimes}
\newcommand{\unit}{I}
\newcommand{\lu}{\lambda}
\newcommand{\ru}{\rho}
\newcommand{\ass}{\alpha}
\newcommand{\liftstruct}[1]{\hat{#1}}
\newcommand{\tensorD}{\liftstruct{\tensor}}
\newcommand{\unitD}{\liftstruct{\unit}}
\newcommand{\luD}{\liftstruct{\lu}}
\newcommand{\ruD}{\liftstruct{\ru}}
\newcommand{\assD}{\liftstruct{\alpha}}
\newcommand{\pt}[1]{\mathsf{\mu}^{#1}}
\newcommand{\pu}[1]{\mathsf{\epsilon}^{#1}}
\newcommand{\plu}[1]{\mathsf{plu}^{#1}}
\newcommand{\idtoiso}{\ensuremath{\cfont{idtoiso}}}
\newcommand{\idtoisoob}[2]{\ensuremath{\cfont{idtoiso^{2,0}_{#1,#2}}}}
\newcommand{\idtoeq}{\ensuremath{\cfont{idtoeq}}}
\newcommand{\eqtoiso}{\ensuremath{\cfont{eqtoiso}}}
\newcommand{\tensorIso}{\cfont{tensorIso}}
\newcommand{\tensorEq}{\cfont{tensorEq}}
\newcommand{\isiso}[1]{\cfont{isIso(}#1\cfont{)}}
\newcommand{\x}{\bar{x}}
\newcommand{\y}{\bar{y}}
\newcommand{\z}{\bar{z}}
\newcommand{\f}{\bar{f}}
\newcommand{\g}{\bar{g}}
\newcommand{\eg}{e.\,g.\xspace}
\newcommand{\ie}{i.\,e.\xspace}
\newcommand{\wrt}{w.\,r.\,t.\xspace}
\newcommand{\resp}{resp.\xspace}
\newtheorem{problem}[theorem]{Problem}
\newtheorem{notation}[theorem]{Notation}
\DeclareFontFamily{U}{min}{}
\DeclareFontShape{U}{min}{m}{n}{<-> udmj30}{}
\newcommand\yon{\!\text{\usefont{U}{min}{m}{n}\symbol{'210}}\!}
\begin{document}

\maketitle

\begin{abstract}
  \emph{Univalent} categories constitute a well-behaved and useful notion of category in univalent foundations.
  The notion of univalence has subsequently been generalized to bicategories and other structures in (higher) category theory.
  Here, we zoom in on monoidal categories and study them in a univalent setting.
  Specifically, we show that the bicategory of univalent monoidal categories is univalent.
  Furthermore, we construct a Rezk completion for monoidal categories: we show how any monoidal category is weakly equivalent to a univalent monoidal category, universally.
  We have fully formalized these results in UniMath, a library of univalent mathematics in the Coq proof assistant.
\end{abstract}

\section{Introduction}
\label{sec:introduction}

When working in univalent foundations (see~\cite{hottbook}), definitions have to be designed carefully in order to correspond, via the intended semantics, to the \emph{expected} notions in set-theoretic foundations.
The notion of univalent category \cite{DBLP:journals/mscs/AhrensKS15} has been shown to be a good notion, in the sense that it corresponds to the usual notion of category under Voevodsky's model in simplicial sets \cite{kapulkinSimplicialModelUnivalent2021}.%
\footnote{To emphasize that univalent categories are the right notion of category in univalent foundations, they are just called ``categories'' in \cite{DBLP:journals/mscs/AhrensKS15}.}
Examples of univalent categories are plentiful, but not all categories arising in practice---for instance when studying categorical semantics of type theory---are univalent. %
In  \cite{DBLP:journals/mscs/AhrensKS15}, the authors give a construction of a ``free'' univalent category from any category $\C$, which they call the Rezk completion of $\C$.

Since then, the univalence condition and completion operation have been studied further.

Firstly, in \cite{phd/NVDW}, Van der Weide constructs a class of higher inductive types using the groupoid quotient.
It is shown that the groupoid quotient gives rise to a biadjunction between the bicategory of groupoids and the bicategory of $1$-types (which is isomorphic to the bicategory of univalent groupoids);
the left adjoint thus yields a univalent completion operation for groupoids.
Van der Weide furthermore lifts this completion to ``structured groupoids'', that is, to groupoids equipped with an algebra structure for some endo-pseudofunctor on (univalent) groupoids.

Secondly, the univalence condition on categories was extended to bicategories in \cite{DBLP:journals/mscs/AhrensFMVW21} and to other (higher-)categorical structures in \cite{univalence_principle}.
In more detail, \cite{univalence_principle} develops a notion of theory for mathematical structures, and a notion of univalence for models of such theories.

Thirdly, univalent displayed graphs are used in \cite{univalent_graphs} to define and study higher groups.

In the present paper, we continue the study of univalent (higher-)categorical structures,
focusing on \emph{monoidal} categories.
Monoidal categories are very useful in a variety of contexts, such as 
quantum mechanics \cite{application_QFT} and computing \cite{application_quantum_computing}, 
modeling concurrency \cite{application_concurrency}, 
probability theory \cite{application_categorical_probability_theory} and probabilistic programming \cite{application_probabilistic_programming}, and
neural networks \cite{application_neuralnetworks}.
We present two results on monoidal categories:

\begin{enumerate}
\item We show that the bicategory of univalent monoidal categories is univalent. Here, a univalent monoidal category is a univalent category with a monoidal structure.
\item We construct, for any monoidal category, a monoidal Rezk completion. It is, in particular, a univalent monoidal category; the challenge lies in establishing the universal property of a Rezk completion, here modified for monoidal categories.
\end{enumerate}

Both results have been formalized in the \UniMath library of univalent mathematics, based on the Coq proof assistant.

The first of these results may be considered to be a basic sanity check; failing to prove this would question the validity of our definitions.
However, its proof is technically difficult, and, in our experience, only feasible through the disciplined application of ``displayed'' technology as developed in \cite{DBLP:journals/lmcs/AhrensL19} and \cite{DBLP:journals/mscs/AhrensFMVW21}.

The second result consists, more specifically, of a lifting of the Rezk completion for categories as constructed in \cite{DBLP:journals/mscs/AhrensKS15} to the monoidal structure.
As such, it also relies on displayed technology: the equivalence expressing the universal property of our monoidal Rezk completion is given as a displayed equivalence on top of the equivalence constructed in \cite{DBLP:journals/mscs/AhrensKS15}.

Our work is strongly related to some of the work mentioned above.

Firstly, an instance of Van der Weide's work covers monoidal groupoids; see \cite[Section~6.7.4]{phd/NVDW}.
Compared to that work, our work discusses monoidal \emph{categories} rather than groupoids, but does not cover general structures.
In particular, we also provide a completion operation for \emph{lax} and \emph{oplax} monoidal categories. 
Work on the ``pushout'' of our and Van der Weide's work, a Rezk completion for structured categories, is ongoing (see also \cref{sec:conclusion}).

Secondly, \cite[Example~8.7]{univalence_principle} studies monoidal categories. It is shown there that the general univalence condition on a model of the theory of monoidal categories defined in that work simplifies, in the case of monoidal categories, to the underlying category being univalent.
Thus, the univalent monoidal categories of \cite[Example~8.7]{univalence_principle} are the same as the ones studied in the present work.

In the remainder of the introduction, we review the Rezk completion and displayed (bi)categories, respectively.
We also give some details about the formalization.

\begin{notation}
In order to stay consistent with the notation used in \UniMath, we write the composition in diagrammatic order, \ie, the composition of $f: x\to y$ and $g:y\to z$ is denoted as $f\comp g: x \to z$.
\end{notation}

There are different notions of \textit{sameness} between categories:
\begin{definition} A functor $F : \CC \to \DD$ is called
\begin{enumerate}
\item a \textbf{weak equivalence} if it is fully faithful and essentially surjective;
\item a \textbf{(strong) equivalence} if it is fully faithful and split essentially surjective.
Equivalently, this means that $F$ is invertible up to a natural isomorphism; 
\item an \textbf{adjoint equivalence} is a (strong) equivalence $F$ whose inverse (up to a natural isomorphism) is the right adjoint of $F$;
\item an \textbf{isomorphism} if it is fully faithful and the function on objects is an equivalence of types.
\end{enumerate}
\end{definition}

Even though these four concepts are closely related, they enjoy different properties. 
The Rezk completion is, in general, only a weak equivalence; categorical structure does not necessarily transfer along a weak equivalence. 
For strict categories (\ie, categories whose type of objects is a set), the statement that every weak equivalence is an (adjoint) equivalence is equivalent to the axiom of choice.
However, if one restricts to univalent categories, these four notions are always equivalent (without using the axiom of choice). 

\subsection{Review of the Rezk completion for categories}
\label{sec:revi-rezk-compl}

The Rezk completion for categories was constructed in \cite{DBLP:journals/mscs/AhrensKS15}.
In essence, given a category $\C$, its Rezk completion is given by a univalent category $\RC(\C)$ and a weak equivalence $\HH : \C \to \RC(\C)$. %
This weak equivalence has the following property: any functor $F : \C \to \EE$, with $\EE$ a univalent category, factors \emph{uniquely} via $\HH$, as depicted in the following diagram.
\begin{equation}
  \begin{tikzcd}
    \C \ar[dr, "F"] \ar[d, "\HH"']
    \\
    \RC(\C) \ar[r, dashed, "\exists !"']
    &
    \EE
  \end{tikzcd}
  \label{eq:rezk-quotient}
\end{equation}

\begin{remark}
  The universal property satisfied by the Rezk completion is a bicategorical one, see \cref{def:left-universal-arrow}.
  From a purely category-theoretic viewpoint, the factorization in \cref{eq:rezk-quotient} is unique up to natural isomorphism.
  However, since $\EE$ is univalent, the functor category $[\RC(\C),\EE]$ is also univalent. 
Therefore, the factorization of such a functor is unique.
\end{remark}

In \cite{DBLP:journals/mscs/AhrensKS15}, it is said that the construction gives a universal way to replace a category by a univalent category. 
This construction is indeed universal in a bicategorical sense, according to the following lemma:
\begin{lemma}[{\cite[Thm.~8.4]{DBLP:journals/mscs/AhrensKS15}}, \coqident{CategoryTheory.PrecompEquivalence}{precomp_adjoint_equivalence}]
\label{weq-induces-iso-lemma}
Let $\HH : \CC \to \DD$ be a weak equivalence between categories. 
For any univalent category $\EE$, the functor 
$\precomp{\HH} : [\DD, \EE] \to [\CC,\EE]$
is an adjoint equivalence of categories.
\end{lemma}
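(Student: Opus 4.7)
The plan is to establish that $\precomp{\HH}$ is fully faithful and essentially surjective, and then upgrade this to an adjoint equivalence using univalence of the functor categories. Since $\EE$ is univalent, both $[\CC,\EE]$ and $[\DD,\EE]$ are univalent (functor categories into a univalent category are univalent), and, as observed just before the lemma, a weak equivalence between univalent categories is automatically an adjoint equivalence.

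For full faithfulness, fix $F, G \colon \DD \to \EE$ and a natural transformation $\alpha \colon \HH \comp F \Rightarrow \HH \comp G$; one needs a unique $\tilde\alpha \colon F \Rightarrow G$ with $\tilde\alpha_{\HH(c)} = \alpha_c$ for every $c$. For each $d \in \DD$, essential surjectivity of $\HH$ merely gives some $c \in \CC$ and iso $\phi \colon \HH(c) \cong d$, suggesting the candidate component $\tilde\alpha_d := F(\phi^{-1}) \comp \alpha_c \comp G(\phi)$. Using fully faithfulness of $\HH$, one checks that this formula is independent of the choice of $(c,\phi)$, so the resulting map factors through the propositional truncation of the type of such pairs and yields a genuine morphism $\tilde\alpha_d \colon F(d) \to G(d)$ in $\mor{\EE}{F(d)}{G(d)}$. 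Naturality of $\tilde\alpha$ and uniqueness are then routine diagram chases, and the defining equation $\tilde\alpha_{\HH(c)} = \alpha_c$ holds by construction.

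For essential surjectivity, given $F \colon \CC \to \EE$, one constructs an extension $\tilde F \colon \DD \to \EE$ together with a natural iso $\HH \comp \tilde F \cong F$. The main obstacle is defining $\tilde F$ on objects: one has only mere existence of a pair $(c, \phi \colon \HH(c) \cong d)$ and so cannot set $\tilde F(d) := F(c)$ on the nose. The key move is that for any two such pairs $(c_1,\phi_1)$ and $(c_2,\phi_2)$, fully faithfulness of $\HH$ yields an iso $c_1 \cong c_2$ in $\CC$, hence an iso $F(c_1) \cong F(c_2)$ in $\EE$; by \emph{univalence of $\EE$} this iso promotes to an \emph{equality} $F(c_1) = F(c_2)$ in $\ob{\EE}$. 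The assignment $(c,\phi) \mapsto F(c)$ is thus weakly constant into $\ob{\EE}$ with enough coherence to factor through the propositional truncation, giving a well-defined $\tilde F(d)$. Its action on morphisms, functoriality, and the natural iso $\HH \comp \tilde F \cong F$ then follow from fully faithfulness of $\HH$ combined with the uniqueness of the construction. This passage from mere existence to honest data is the hard part, and it is precisely where univalence of $\EE$ plays the role that the axiom of choice would play in a set-theoretic argument.
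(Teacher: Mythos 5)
The paper offers no proof of its own here---it simply cites \cite[Thm.~8.4]{DBLP:journals/mscs/AhrensKS15} and the formalized \texttt{precomp\_adjoint\_equivalence}---and your reconstruction follows exactly the strategy of that cited proof: show $\precomp{\HH}$ is fully faithful and essentially surjective by hand, then upgrade to an adjoint equivalence using univalence of the functor categories, with univalence of $\EE$ standing in for choice when defining the extension on objects. The one place you are lighter than the formal argument is the phrase ``with enough coherence'': since $\ob{\EE}$ is a $1$-type rather than a set, weak constancy alone does not let you factor through the propositional truncation, and the precise fix (as in the cited proof) is to package the candidate object $d$ together with a compatible family of isomorphisms $F(c)\cong d$ indexed by all pairs $(c,\phi)$ into a type that one proves is an inhabited proposition---but that is precisely the coherence you allude to, so your approach is sound and essentially identical to the reference.
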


\cref{weq-induces-iso-lemma}, when applied to the Rezk completion, provides an instance of a ``(left) universal arrow'':
\begin{definition}[\coqident{Bicategories.PseudoFunctors.UniversalArrow}{left_universal_arrow}]\label{def:left-universal-arrow}
Let $R : \BB_2 \to \BB_1$ be a pseudo-functor.
A \textbf{left universal arrow} from an object $x : \ob{(\BB_1)}$ to $R$ is given by:
\begin{enumerate}
\item an object $L\, x : \ob{(\BB_2)}$,
\item a morphism $\eta_x : \mor{\BB_1}{x}{R(L\,x)}$;
\item for any $y : \ob{(\BB_2)}$,
the functor 
\[
{\eta_x}\comp (R\,-) : \mor{\BB_2}{L\,x}{y} \to \mor{\BB_1}{x}{R\, y}\enspace,
\]
which acts on morphisms by applying $R$ and whiskering with $\eta_x$,
is an adjoint equivalence of categories.
\end{enumerate}
\end{definition}

\begin{remark}
Writing  $\CAT$ for the bicategory of categories, functors, and natural transformations,
and $\CATU$ for the full sub-bicategory of $\CAT$ consisting of \emph{univalent} categories, functors, and natural transformations,
\cref{weq-induces-iso-lemma} applied to the Rezk completion of $\CC$ provides a universal arrow from $\CC$ to the inclusion $\CATU \hookrightarrow \CAT$.
We expect the following to hold: if we have, for any object $x$, a left universal arrow to $R$ with object part $L\,x$, then the assignment $x \mapsto L\, x$ induces a pseudo-functor $L : \BB_1 \to \BB_2$ which is a left biadjoint to $R$. 
Hence, \cref{weq-induces-iso-lemma} applied to the Rezk completion would yield a left bi-adjoint to the inclusion $\CATU \hookrightarrow \CAT$.
However, we have not found a reference for the connection between universal arrows and biadjunctions.
As we do not need this correspondence, we do not develop it further.
\end{remark}

\begin{remark}
In \cite{DBLP:journals/mscs/AhrensKS15}, the Rezk completion has been constructed as the co-restriction of the Yoneda embedding to its image. 
It is already known how the Yoneda embedding transports the monoidal structure; more details on the connection between these approaches are given in \cref{sec:day-conv}. 
However, this construction raises the universe level of the type of objects and morphisms. 
In \url{https://1lab.dev/Cat.Univalent.Rezk.html}, the authors show how to decrease the universe level of the type of objects by one, using the construction of small images (and, in particular, higher inductive types).
One can also construct (the type of objects of) the Rezk completion as a higher inductive type. This has been done in \cite{hottbook}.

In this paper, we work with an \emph{abstract} Rezk completion of a category instead of a concrete implementation. 
Consequently, the approach presented here can be applied to any of those constructions.
\end{remark}

\subsection{Review of displayed (bi)categories}
\label{sec:revi-displ-bicat}
In this section, we recall the basic concepts of displayed bicategories and their univalence. 
More information can be found in \cite{DBLP:journals/mscs/AhrensFMVW21}.

Let us first briefly recall the idea of displayed categories.

\newcommand{\catb}[1]{\mathbf{#1}}
\newcommand{\SET}{\catb{Set}}
\newcommand{\MON}{\catb{Mon}}

Many concrete examples of categories are given by structured sets and structure-preserving functions.
An example of this is the category $\MON$ of monoids and monoid homomorphisms. 
In particular, an identity morphism is an identity function (\ie, the identity morphism in $\SET$) and the composition of monoid homomorphisms is given by the composition of the underlying functions (\ie, the composition in $\SET$).
Therefore, working in a category of structured sets often means lifting structure of the category $\SET$ to the additional structure. 
An example of this phenomenon is the product of monoids: the underlying set of a product of monoids can be constructed as the product of the underlying sets (\cref{category-groups-disp-example}).

The notion of \textbf{displayed category} formalizes the process of creating a new category out of an old category by adding structure and/or properties on the objects and/or morphisms in the following way:
a displayed category ({\cite[Def.~3.1]{DBLP:journals/lmcs/AhrensL19}}) specifies precisely the extra structure and the extra laws needed to build the new category out of the old one. This new category is then called the \emph{total category} of the displayed category ({\cite[Def.~3.2]{DBLP:journals/lmcs/AhrensL19}}). 

\begin{example}
\label{category-groups-disp-example}
The category $\MON$ of monoids can be constructed as a total category over $\SET$ as follows:
\begin{enumerate}
\item For $X : \SET$, the type of displayed objects over $X$ is the type of monoid structures on $X$:
\[
\sum_{m : X \times X \to X} \sum_{e : X} \cfont{isAssociative(}m\cfont{)} \times \prod_{x : X} \left(e\cdot x = x \times x \cdot e = x\right)\enspace,
\]
where $\cfont{isAssociative(}m\cfont{)}$ is the proposition stating that $m$ is associative.
\item Assume given $X,Y : \SET, f : \mor{\SET}{X}{Y}$ and $(m_X, e_X, p_X)$ (\resp $(m_Y, e_Y, p_Y)$) a displayed object over $X$ (\resp $Y$), \ie, the structure of a monoid. The type of displayed morphisms over~$f$ is the proposition stating that $f$ is a monoid homomorphism from $(m_X, e_X, p_X)$ to $(m_Y, e_Y, p_Y)$:
\[
\left(f\,e_X = e_Y\right) \times \prod_{x_1,x_2 : X} f\,(m_X(x_1,x_2)) = m_Y(f\,x_1,f\,x_2)\enspace.
\]
\end{enumerate}
\end{example}

Analogously, there is also the notion of a \textbf{displayed bicategory}:

\begin{definition}
[{\cite[Def.~6.1]{DBLP:journals/mscs/AhrensFMVW21}}, \coqident{Bicategories.DisplayedBicats.DispBicat}{disp_bicat}]
\label{disp-bicat-definition}
Let $\BB$ be a bicategory. A \textbf{displayed bicategory} $\DD$ over $\BB$ consists of:
\begin{enumerate}
\item for any $x : \BB$, a type $\DD_x$ of \emph{displayed objects} over $x$,
\item for any $f: \mor{\BB}{x}{y}$ and $\x : \DD_x$ and $\y : \DD_y$, a type $\DD_f(\x,\y)$ of \emph{displayed morphisms} over~$f$,
\item for any $\alpha : \mor{\BB}{x}{y}(f,g)$ and $\f : \DD_f(\x,\y)$ and $\g : \DD_g(\x,\y)$, a set $\f \xRightarrow{\alpha} \g$ of \emph{displayed $2$-cells} over $\alpha$;
\end{enumerate}
together with a composition of displayed morphisms and displayed $2$-cells (over the composition in $\BB$) and a displayed identity morphism and $2$-cell (over the identity morphism \resp $2$-cell in $\BB$).
The axioms of a bicategory have corresponding displayed axioms (over those axioms in $\BB$).
\end{definition}

\begin{definition} 
[{\cite[Def.~6.2]{DBLP:journals/mscs/AhrensFMVW21}}, \coqident{Bicategories.DisplayedBicats.DispBicat}{total_bicat}]
Let $\DD$ be a displayed bicategory over $\BB$. 
The \textbf{total bicategory} of $\DD$, denoted as $\int \DD$, has as $i$-cells (with $i=0,1,2$), pairs $(x,\x)$ where $x$ is an $i$-cell of $\BB$ and $\x$ is a displayed $i$-cell of $\DD$ over $x$.
\end{definition}

\begin{example}
The bicategory whose objects are categories equipped with a terminal object, whose morphisms are functors preserving the terminal objects (strongly) and whose $2$-cells are natural transformations,
can be constructed as a total bicategory over $\CAT$ as follows:
\begin{enumerate}
\item For $\CC : \CAT$, the type of displayed objects over $\CAT$ is the type expressing that $\CC$ has a terminal object:
\[
\sum_{X :\,\CC} \cfont{isTerminal(}X\cfont{)}\enspace.
\]
\item Assume given $\CC,\DD : \CAT, F : \mor{\CAT}{\CC}{\DD}$ and $(T_\CC, p_\CC)$ (\resp $(T_\DD,p_\DD)$) displayed objects over $\CC$ (\resp $\DD$).
The type of displayed morphisms over~$F$ is the proposition stating that $F$ preserves the terminal object:
\[
\cfont{isIsomorphism(}!\cfont{)}\enspace,
\]
where $!$ is the unique morphism $F\, T_\CC \to T_\DD$ given by the universal property of the terminal object $T_\DD$.
\item Let $F, G : \mor{\CAT}{\CC}{\DD}$ be functors between categories $\CC$ and $\DD$ and assume:
\begin{enumerate}
\item $(T_\CC, p_\CC)$ (\resp $(T_\DD,p_\DD)$) a witness that $\CC$ (\resp $\DD$) has a terminal object, \ie, it is a displayed object over $\CC$ (\resp $\DD$),
\item $\pt{F}$ (\resp $\pt{G}$) a proof witnessing that $F$ (\resp $G$) preserves the terminal object strongly, \ie, $\pt{F}$ (\resp $\pt{G}$) is a displayed morphism over $F$ (\resp $G$).
\end{enumerate}
For any natural transformation $\alpha : F \Rightarrow G$, the type of displayed $2$-cells over $\alpha$ is the unit type.
\end{enumerate}
\end{example}

Given displayed bicategories $\DD_1$ and $\DD_2$ over a bicategory $\BB$, 
we construct the product $\DD_1 \times \DD_2$ over $\BB$. 
The displayed objects, morphisms, and $2$-cells are pairs of objects, morphisms, and $2$-cells, respectively
(\coqident{Bicategories.DisplayedBicats.Examples.Prod}{disp_dirprod_bicat}).

A displayed bicategory is \emph{locally} \emph{univalent} if the function of type
\[
\bar{f} =_p \bar{g} \to \bar{f} \cong_{\mathsf{idtoiso}^{2,1}_{f,g}(p)} \bar{g}\enspace,
\]
sending $\mathsf{refl}$ to the identity displayed isomorphism,
is an equivalence of types for all morphisms $f$ and $g$ of the same type, $p : f = g$ and $\bar{f}$ (\resp $\bar{g}$) displayed morphisms over $f$ (\resp $g$).

A displayed bicategory is \emph{globally} \emph{univalent} if the function of type
\[
\bar{x} =_p \bar{y} \to \bar{x} \simeq_{\mathsf{idtoiso}^{2,0}_{x,y}(p)} \bar{y}\enspace,
\]
sending $\mathsf{refl}$ to the identity displayed adjoint equivalence,
is an equivalence of types for all objects $x$ and $y$, $p : x = y$ and $\bar{x}$ (\resp $\bar{y}$) displayed objects over $x$ (\resp $y$).

A displayed bicategory is \emph{univalent} if it is both locally and globally univalent
(\coqident{Bicategories.DisplayedBicats.DispUnivalence}{disp_univalent_2}, \coqident{Bicategories.DisplayedBicats.DispUnivalence}{disp_univalent_2_1}, \coqident{Bicategories.DisplayedBicats.DispUnivalence}{disp_univalent_2_0}).

\begin{lemma}
[{\cite[Thm.~7.4]{DBLP:journals/mscs/AhrensFMVW21}}, \coqident{Bicategories.DisplayedBicats.DispUnivalence}{total_is_univalent_2}]
\label{total-bicat-univalent-lemma}
Let $\DD$ be a displayed bicategory over $\BB$ and $q\in\{\textit{locally},\textit{globally}\}$. Then
$\int D$ is $q$-univalent if $\BB$ is $q$-univalent and $\DD$ is $q$-univalent.
\end{lemma}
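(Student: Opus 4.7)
The plan is to reduce both forms of univalence of $\int \DD$ to a component-wise analysis, exploiting the fact that the total bicategory is built levelwise as a sigma-type construction. The essential observation is that for any two parallel cells of $\int \DD$, both paths and invertible/adjoint-equivalent cells decompose as dependent pairs. Concretely, for objects $(x,\x),(y,\y)$ of $\int \DD$, a path $(x,\x) = (y,\y)$ is (by the characterization of equality in sigma types) equivalent to a pair $(p : x = y, q : \x =_p \y)$; and an adjoint equivalence in $\int \DD$ is equivalent to a pair consisting of an adjoint equivalence $e : x \simeq y$ in $\BB$ together with a displayed adjoint equivalence $\x \simeq_e \y$ over $e$. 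The analogous decompositions hold for morphisms and invertible $2$-cells.

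First I would make these structural equivalences precise, either by citing them from \cite{DBLP:journals/mscs/AhrensFMVW21} or by proving them directly from the definition of $\int \DD$; the proof for invertible $2$-cells is straightforward, while the one for adjoint equivalences requires unpacking unit, counit, and triangle identities and checking that they all split into base and displayed data (this splitting is precisely what the definition of displayed adjoint equivalence is designed to support).

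Next, I would show that under these equivalences the map $\idtoiso$ on $\int \DD$ is identified with the product of the corresponding maps on $\BB$ and $\DD$. Explicitly, for the local case, transporting $\idtoisomor{(f,\f)}{(g,\g)}$ along the pair decompositions yields the function sending $(p,q)$ to $(\idtoisomor{f}{g}(p), \text{disp-}\idtoiso(q))$; the same recipe works globally with $\idtoisoob{\,}{\,}$ and the displayed analogue. This identification is proved by path induction: both sides send $\mathsf{refl}$ to an identity cell, so the standard ``$J$-rule'' trick on the underlying path $p$ and then on the displayed path $q$ closes it.

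With these identifications in hand, the lemma is immediate: a product of two equivalences is an equivalence. For $q = \textit{locally}$, local univalence of $\BB$ and displayed local univalence of $\DD$ make each factor an equivalence, so $\idtoisomor{(f,\f)}{(g,\g)}$ is an equivalence; the same argument handles $q = \textit{globally}$. The main obstacle, as flagged above, is the correct handling of adjoint equivalences in the global case: one must carefully verify that a total adjoint equivalence really does decompose into base plus displayed adjoint equivalences. Once that decomposition is in place, the rest is a routine composition of equivalences.
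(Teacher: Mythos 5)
The paper states this lemma without proof, simply citing \cite[Thm.~7.4]{DBLP:journals/mscs/AhrensFMVW21} and the corresponding UniMath lemma, and your argument is essentially the proof given in that reference: decompose paths and (adjoint) equivalences in $\int\DD$ as base-plus-displayed pairs, identify $\idtoiso$ with the induced map on these sigma types, and conclude because a fiberwise equivalence over an equivalence of bases induces an equivalence of total spaces (your phrase ``product of two equivalences'' should be read in this total-map sense, as your own pair decomposition makes clear). Your flagged obstacle is the right one; note only that, in the formalization, establishing the decomposition of total adjoint equivalences in the global case also draws on local univalence, which is why the cited theorem assumes full univalence of base and displayed bicategory rather than treating the two cases entirely independently.
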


\begin{remark}
As witnessed by \cref{total-bicat-univalent-lemma}, certain properties of the total bicategory can be expressed in terms of the \emph{base} bicategory and the displayed bicategory.
This allows one to divide a problem, in this case showing univalence, into multiple steps.

Therefore, while we are interested in studying the total bicategory, we usually only describe the displayed bicategory.
\end{remark}

\begin{definition}
[{\cite[Def.~7.7]{DBLP:journals/mscs/AhrensFMVW21}}, \coqident{Bicategories.DisplayedBicats.DispBicat}{disp_locally_groupoid}, {\cite[Def.~7.8]{DBLP:journals/mscs/AhrensFMVW21}}, \coqident{Bicategories.DisplayedBicats.DispBicat}{disp_2cells_isaprop}]
A displayed bicategory $\DD$, over a bicategory $\BB$, is called
\begin{enumerate}
\item \textbf{Locally groupoidal} if all displayed $2$-cells over invertible $2$-cells are invertible;
\item \textbf{Locally propositional} if each type of displayed $2$-cells is a proposition.
\end{enumerate}
\end{definition}

We will also need the displayed analogue of the concept of a functor being essentially surjective:
\begin{definition}
[\coqident{CategoryTheory.DisplayedCats.Functors}{disp_functor_disp_ess_split_surj}]
A displayed functor $\bar{F} : \DD_1 \to \DD_2$ over a functor $F : \CC_1 \to \CC_2$ is \textbf{displayed split essentially surjective} if for any $x : \CC$ and $\bar{y} : (\DD_2)_{F\, x}$, 
a displayed object $\bar{x} : (\DD_1)_x$ is given together with a displayed isomorphism between $\bar{F}\, \bar{x}$ and $\bar{y}$ over the identity isomorphism $\id{F\,x}$.
\end{definition}

\subsection{Formalization in UniMath}
\label{sec:form-unim}

The results presented here are formulated inside intensional dependent type theory. 
We carefully distinguish between data and properties, \ie, data is always explicitly given which avoids the use of the axiom of choice and the law of excluded middle.
The results presented here are formalized and checked in the library \UniMath \cite{UniMath} of univalent mathematics, based on the proof assistant \Coq \cite{coq}.

The formalization referred to in this paper is presented in the \UniMath commit \href{https://github.com/UniMath/UniMath/tree/\longhash}{\texttt{\shorthash}} (more precisely, the given link leads to the source code repository right after merging this commit).
A generated HTML documentation of the sources at this commit is hosted online.
Most of our definitions, lemmas, and theorems are accompanied by a link which leads to the corresponding definition, lemma, and theorem in the documentation.

The formalization is built upon the existing library of (bi)category theory and the theory of displayed (bi)categories. The ($1$-)categorical formulation of displayed categories has been developed in \cite{DBLP:journals/lmcs/AhrensL19} and the bicategorical formulation has been developed in \cite{DBLP:journals/mscs/AhrensFMVW21}.

The accompanying code, specific to this work, consists of approximately $7000$ lines of code. However, the formalisation also made it necessary to contribute to the \UniMath library on monoidal categories more generally.

\section{The bicategory of monoidal categories}
\label{sec:bicat-mono-cat-constr}
In this section we construct the bicategory $\MONCAT$ (\resp $\MONCATS$) of monoidal categories, lax (\resp strong) monoidal functors and monoidal natural transformations.
We construct this bicategory as the total bicategory of a displayed bicategory over the bicategory $\CAT$ of categories, functors, and natural transformations. 

This displayed bicategory in itself is constructed by stacking different displayed bicategories. 
First, we construct a displayed bicategory $\CATT$ (\resp $\CATUnit$) over $\CAT$ that adds a tensor (\resp a unit).
Then, we construct displayed bicategories $\CATLU,\CATRU$ and $\CATA$ over the total bicategory of $\CATTU := \CATT \times \CATUnit$ that add the left unitor, right unitor and the associator, respectively. The product of these displayed bicategories is denoted by $\CATUA$ and the laws that relate the unitors and the associator, \eg, the triangle and pentagon identities, are represented by a full (displayed) sub-bicategory $\CATP$ of $\CATUA$. Lastly, we also have a displayed (sub)bicategory $\CATS$ of $\CATP$ that enforces the strongness of the monoidal functors.

The construction is summarized in \cref{fig:constructionmoncat}.
\begin{figure}[tb]
\begin{equation}
\label{img:diagram_constr_dispcats}
\begin{tikzcd}
& \CATS \arrow[d, "Def. \ref{disp-bicat-monstrong-definition}", hook] & \\
& \CATP \arrow[d, "Def. \ref{disp-bicat-mon-definition}" , hook] & \\
& \CATUA \arrow[ld, dashed] \arrow[d, dashed] \arrow[rd, dashed] & \\
\CATLU \arrow[rd, swap, "Def. \ref{disp-bicat-lunit-definition}"] & \CATRU \arrow[d,swap, "Def. \ref{disp-bicat-runit-definition}"] & \CATA \arrow[ld, "Def. \ref{disp-bicat-ass-definition}"] \\
& \CATTU \arrow[ld, dashed] \arrow[rd, dashed] & \\
\CATUnit \arrow[r,swap, "Def. \ref{disp-bicat-unit-definition}"] & \CAT & \CATT \arrow[l, "Def. \ref{disp-bicat-tensor-definition}"]
\end{tikzcd}
\end{equation}
\caption{Overview of construction steps towards $\MONCAT$ and $\MONCATS$}\label{fig:constructionmoncat}
\end{figure}
The precise meaning of this diagram is explained in the rest of this section and further explained in \cref{rem:diagram_constr_dispcats}.

\begin{remark}
Although the construction of $\MONCAT$ (\resp $\MONCATS$) is standard (when working in univalent foundations), we explain the construction in quite some detail because  both \cref{sec:bicat-mono-categ} and \cref{sec:rezk-compl-mono} heavily depend on the construction of monoidal categories (\resp lax/strong monoidal functors and natural transformations) in this displayed way. 
In particular, this allows us to fix notation and allows for the big picture of the constructions to become more visible. 
\end{remark}

The first displayed bicategory we construct adds the structure of a tensor and a unit. 
Since the unit and tensor are (without the unitors) independent of each other, we can define this as the product of displayed bicategories, the first representing the tensor and the second representing the unit.

\begin{definition}
[\coqident{Bicategories.MonoidalCategories.UnivalenceMonCat.TensorLayer}{bidisp_tensor_disp_bicat}]
\label{disp-bicat-tensor-definition}
The displayed bicategory $\CATT$ over $\CAT$ is defined as follows:

\begin{enumerate}
\item The displayed objects over a category $\CC : \CAT$ are the functors of type $\CC \times \CC \to \CC$, called \emph{tensors} over $\CC$ and are denoted by $\tensor_\CC$.

\item The displayed morphisms over a functor $F:\CC\to\DD$ from $\tensor_\CC$ to $\tensor_D$
are the natural transformations of type $(F\times F) \comp \tensor_\DD \Rightarrow \tensor_\CC \comp F$, called \emph{witnesses of tensor-preservation of $F$} and are denoted by $\pt{F}$.
\item The displayed $2$-cells over a natural transformation $\alpha : F\Rightarrow G$ from  $\pt{F}$ to $\pt{G}$ are the proofs of the proposition
\[
\prod_{x,y : \CC} (\alpha_x \tensor_D \alpha_y) \comp \pt{G}_{x, y}
                  = \pt{F}_{x,y} \comp \alpha_{x \tensor_C y}\enspace.
\]

\end{enumerate}
\end{definition}

\begin{definition}
[\coqident{Bicategories.MonoidalCategories.UnivalenceMonCat.UnitLayer}{bidisp_unit_disp_bicat}]
\label{disp-bicat-unit-definition}
The displayed bicategory $\CATUnit$ over $\CAT$ is defined such that:

\begin{enumerate}
\item The displayed objects over a category $\CC : \CAT$ are the objects of $\CC$, called \emph{units} over $\CC$ and are denoted by $\unit_\CC$.

\item The displayed morphisms over a functor $F:\CC\to\DD$ from $\unit_\CC$ to $\unit_\DD$
are the morphisms of type $\mor{\DD}{\unit_\DD}{F \unit_\CC}$, called \emph{witnesses of unit-preservation of $F$} and are denoted by $\pu{F}$.

\item The displayed $2$-cells over a natural transformation $\alpha : F\Rightarrow G$ from  $\pu{F}$ to $\pu{G}$ are the proofs of the proposition \[ \pu{F} \comp \alpha_{\unit_\CC} = \pu{G}\enspace. \]

\end{enumerate}
\end{definition}

We denote by $\CATTU$ the displayed bicategory which is the product of $\CATT$ and $\CATUnit$ (\coqident{Bicategories.MonoidalCategories.UnivalenceMonCat.TensorUnitLayer}{bidisp_tensor_unit}).

To fix some notation: 
The total bicategory $\int \CATTU$ has as objects triples $(\CC,\tensor_{\CC},\unit_{\CC})$ where $\CC$ is a category, $\tensor_\CC$ a tensor on $\CC$ and $\unit_\CC$ a unit on $\CC$. 
A morphism from $(\CC,\tensor_\CC,\unit_\CC)$ to $(\DD,\tensor_\DD,\unit_\DD)$ is a triple $(F,\pt{F},\pu{F})$ where $F$ is a functor of type $\CC \to \DD$, $\pt{F}$ a witness of tensor-preservation of $F$ and $\pu{F}$ a witness of unit-preservation of $F$.

We now add the unitors and the associator. Since they are independent of each other (before adding the triangle and pentagon equalities), we can again define them as a product of displayed bicategories.
These displayed bicategories have trivial displayed $2$-cells since monoidal natural transformations only use the data of the tensor and the unit.
Thus we define these displayed bicategories as displayed categories. 
The formal construction of turning a displayed category into a displayed bicategory with trivial $2$-cells is formalized as \coqident{Bicategories.DisplayedBicats.Examples.DisplayedCatToBicat}{disp_cell_unit_bicat}.

\begin{definition}
[\coqident{Bicategories.MonoidalCategories.UnivalenceMonCat.AssociatorUnitorsLayer}{bidisp_lu_disp_bicat}]
\label{disp-bicat-lunit-definition}
The displayed bicategory $\CATLU$ over $\int \CATTU$ is defined as the displayed category (with trivial $2$-cells) such that:

\begin{enumerate}
\item The displayed objects over a triple $(\CC,\tensor_C,\unit_C)$ are the natural isomorphisms of type ${(\unit_\CC \tensor_\CC -)}\Rightarrow{\id{\CC}}$, called \emph{left unitors} over $(\CC,\tensor_C,\unit_C)$ and are denoted by $\lu^\CC$.

\item The displayed morphisms over a triple $(F,\pt{F},\pu{F})$ from $\lu^\CC$ to $\lu^\DD$ are proofs of the proposition:
\[
\prod_{x : \CC} (\pu{F} \tensor_\DD \id{F x}) 
	\comp \pt{F}_{\unit_\CC, x} \comp F \lu^\CC_{x} = \lu^\DD_{F x}\enspace.
\]
\label{disp-bicat-lunit-definition-morphisms}
\end{enumerate}
\end{definition}

\begin{definition}
[\coqident{Bicategories.MonoidalCategories.UnivalenceMonCat.AssociatorUnitorsLayer}{bidisp_ru_disp_bicat}]
\label{disp-bicat-runit-definition}
The displayed bicategory $\CATRU$ over $\int \CATTU$ is defined as the displayed category (with trivial $2$-cells) such that:

\begin{enumerate}
\item The displayed objects over a triple $(\CC,\tensor_C,\unit_C)$ are the natural isomorphisms of type ${(- \tensor_\CC \unit_\CC)}\Rightarrow {\id{\CC}}$, called \emph{right unitors} over $(\CC,\tensor_C,\unit_C)$ and are denoted as $\ru^\CC$.

\item The displayed morphisms over a triple $(F,\pt{F},\pu{F})$ from $\ru^\CC$ to $\ru^\DD$ are proofs of the proposition:
\[
\prod_{x : \CC} (\id{F x} \tensor_\DD \pu{F}) 
	\comp \pt{F}_{x,\unit_\CC} \comp F \ru^\CC_{x} = \ru^\DD_{F x}\enspace.
\]

\end{enumerate}
\end{definition}

\begin{definition}
[{\coqident[bidisp_associator_disp_bicat]{Bicategories.MonoidalCategories.UnivalenceMonCat.AssociatorUnitorsLayer}{bidisp_ass_disp_bicat}}]
\label{disp-bicat-ass-definition}
The displayed bicategory $\CATA$ over $\int \CATTU$ is defined as the displayed category (with trivial $2$-cells) such that:

\begin{enumerate}
\item The displayed objects over a triple $(\CC,\tensor_C,\unit_C)$ are the natural isomorphisms of type ${((- \tensor_\CC -) \tensor_\CC -)}\Rightarrow{(- \tensor_\CC (- \tensor_\CC -))}$, called \emph{associators} over $(\CC,\tensor_C,\unit_C)$ and are denoted as $\ass^\CC$.

\item The displayed morphisms over a triple $(F,\pt{F},\pu{F})$ from $\ass^\CC$ to $\ass^\DD$ are proofs of the proposition:
\[
\prod_{x,y,z: \CC} (\pt{F}_{x,y} \tensor_\DD \id{F z})
	\comp \pt{F}_{x \tensor_\CC y,z}
     \comp F \ass^{\CC}_{x,y,z}
	      = \ass^{\DD}_{F x, F y, F z}
		\comp (\id{F x} \tensor_\DD \pt{F}_{y,z})
		\comp \pt{F}_{x,y \tensor_\CC z}\enspace.
\]

\end{enumerate}
\end{definition}

We denote by $\CATUA$ the displayed bicategory over $\int\CATTU$ which is the product of $\CATLU, \CATRU$ and $\CATA$ (\coqident{Bicategories.MonoidalCategories.UnivalenceMonCat.AssociatorUnitorsLayer}{bidisp_assunitors_disp_bicat}).

\begin{definition}
[\coqident{Bicategories.MonoidalCategories.UnivalenceMonCat.FinalLayer}{disp_bicat_univmon}]
\label{disp-bicat-mon-definition}
The displayed bicategory $\CATP$ is the full displayed sub-bicategory of $\CATUA$ specified by the product of the following predicates:
\begin{enumerate}
\item Triangle equality:
\[
\prod_{x,y:\CC} \ass_{x,\unit,y} \comp \id{x} \tensor \lu_y = \ru_x \tensor \id{y}\enspace.
\]
\item Pentagon equality:
\[
\prod_{w,x,y,z : \CC}
      (\ass_{w,x,y} \tensor \id{z}) 
      \comp \ass_{w,x \tensor y, z}
      \comp \id{w} \tensor \ass_{x,y,z}
      = \ass_{w \tensor x,y,z} \comp \ass_{w,x,y \tensor z}\enspace.
\]
\end{enumerate}
\end{definition}

\begin{definition}
[\coqident{Bicategories.MonoidalCategories.UnivalenceMonCat.FinalLayer}{disp_bicat_univstrongfunctor}]
\label{disp-bicat-monstrong-definition}
The displayed bicategory $\CATS$ is the (non-full) displayed sub-bicategory of $\CATP$ where the displayed morphisms are proofs of the proposition
\[
\isiso{\pu{}} \times \prod_{x,y:\CC} \isiso{\pt{}_{x,y}}\enspace.
\]
\end{definition}

The bicategory of monoidal categories, lax (\resp strong) monoidal functors, and monoidal natural transformations is denoted by $\MONCAT := \int \CATP$ (\resp $\MONCATS := \int \CATS$).

\begin{remark}
\label{rem:diagram_constr_dispcats}
The constructions are summarized in Figure \ref{img:diagram_constr_dispcats}.
The dashed arrows correspond to the projection induced by the product of the displayed bicategories to any of the components. 
In particular, this means that the dashed arrows induce a (bi)pullback (of displayed bicategories). 
The filled arrows represent that we have a forgetful pseudofunctor (given by the projection of a total bicategory to its base bicategory).
Lastly, the hooked arrows mean that the domain is constructed as a (displayed) full sub-bicategory.
\end{remark}

\begin{remark}
\label{moncat-sigma-construction-remark}
An object in $\MONCAT$ is of the form $(((\CC,\tensor,\unit),\lu,\ru,\ass),tri,pent)$.
Usually, one wants to consider an object in $\MONCAT$ to be of the form $(\CC, (((\tensor,\unit), \lu,\ru,\ass), tri,pent))$, \ie, as a category equipped with a monoidal structure.
The displayed bicategory whose objects are categories equipped with a monoidal structure can be constructed by applying the sigma construction (\cite[Definition 6.6(2)]{DBLP:journals/mscs/AhrensFMVW21},\coqident{Bicategories.DisplayedBicats.Examples.Sigma}{sigma_bicat}).
Furthermore, this displayed bicategory is univalent by a criterion presented in \cite{DBLP:journals/mscs/AhrensFMVW21}.
As this does not change the message of the paper, we refer the reader to \cite{DBLP:journals/mscs/AhrensFMVW21} for the precise statements, but we do show that the criteria are satisfied in  \cref{lemma:univCATT_groupoidal,lemma:univCATUnit_groupoidal,lemma:univCATUA_groupoidal}.
\end{remark}

\begin{remark}
In the formalization of $\CATLU$ (\resp $\CATRU$, $\CATA$), we do not yet require a left unitor (\resp right unitor, associator) to be an isomorphism. Since being an isomorphism is a proposition, we could and did add these three (indexed) conditions only in the formalization of $\CATP$.
This simplifies the proof of univalence of the bicategory of univalent monoidal categories that is built from $\MONCAT$.
\end{remark}

In \cref{sec:rezk-compl-mono}, we construct a Rezk completion for monoidal categories. 
We are interested in studying the hom-categories of $\MONCAT$ and thus, in particular, the displayed hom-categories.
We now introduce some notations.
Let $\BB$ be a bicategory and $x,y : \BB$ objects. The hom-category from $x$ to $y$ is denoted by $\mor{\BB}{x}{y}$.
Any morphism $f : \mor{\BB}{x}{y}$ induces a functor between hom-categories, more precisely: 
\begin{definition}
\label{precomp-functor-definition}
Let $\BB$ be a bicategory, $f : \mor{\BB}{x}{y}$ a morphism and $z : \BB$ an object.
The \textbf{functor given by precomposition with $f$ and target object $z$} is the functor
\[
\precomp{f} : \mor{\BB}{y}{z} \to \mor{\BB}{x}{z}\enspace,
\]
where the action on the objects is given by precomposition, \ie, $g \mapsto f\cdot g$, 
and the action on the morphisms is given by left whiskering, \ie, $\alpha\mapsto f \lwhisker \alpha$.
\end{definition}
We also refer to the functor given by precomposition with $f$ as the \textbf{precomposition functor with $f$}.

Let $\DD$ be a displayed bicategory over $\BB$ and $\x \in \DD_x$ and $\y \in \DD_y$ be displayed objects.
The (total) hom-category $\mor{\int \DD}{(x,\x)}{(y,\y)}$ can be constructed as a total category of a displayed category over $\mor{\BB}{x}{y}$. %
We denote this displayed category by $\mor{\DD}{\x}{\y}$ (so we use the same notation for the hom-categories and displayed hom-categories).

In particular, the precomposition functor \wrt the total bicategory $\int \DD$ of a morphism $(f,\f)$ can be defined as a displayed functor over the precomposition functor $\precomp{f}$ (\wrt $\BB$) where we precompose/left whisker (in the displayed sense) with $\f$:
\begin{definition}
\label{disp-precomp-functor-definition}
Let $\DD$ be a displayed bicategory over a bicategory $\BB$, $\x : \DD_x, \y : \DD_y$ displayed objects, $\f : \DD_f(\x,\y)$ a displayed morphism and $\z : \DD_z$ a displayed object.
The \textbf{displayed functor given by precomposition with $\f$ and target displayed object $\z$} is the displayed functor
\[
\precomp{\f} : \mor{\DD}{\y}{\z} \to \mor{\DD}{\x}{\z}\enspace
\]
over the functor given by precomposition with $f$ and target object $z$.
\end{definition}
We also refer to the displayed functor given by precomposition with $\f$ as the \textbf{displayed precomposition functor with $\f$}.

\section{The univalent bicategory of monoidal categories}
\label{sec:bicat-mono-categ}

In this \lcnamecref{sec:bicat-mono-categ} we present our proof of univalence of the bicategory $\MONCATU$ of univalent monoidal categories, with \cref{umoncat-univ-theorem} as the main result. 
(We also obtain a version with strong monoidal functors in place of lax monoidal functors.)
In this proof, we rely heavily on the \emph{displayed} machinery built in \cite{DBLP:journals/mscs/AhrensFMVW21}, for modular construction of bicategories, and proofs of their univalence.

In the formalization of this univalence proof, we have not used the formalization of a monoidal category as presented above. Instead, we have changed the definition of a tensor from being a functor to a more explicit, unfolded definition.
It is not necessarily obvious that the resulting bicategory is indeed that of monoidal categories, lax (\resp strong) monoidal functors, and monoidal natural transformations. Therefore, we construct an equivalence of types of monoidal categories as presented above on the one hand and using this explicit definition on the other hand (\coqident{Bicategories.MonoidalCategories.UnivalenceMonCat.EquivalenceMonCatNonCurried}{cmonoidal_to_noncurriedmonoidal}, \coqident{Bicategories.MonoidalCategories.UnivalenceMonCat.EquivalenceMonCatNonCurried}{cmonoidal_adjequiv_noncurried_hom}).

Recall from \cref{total-bicat-univalent-lemma} that the total bicategory of a displayed bicategory is univalent if both
the base bicategory and the displayed bicategory are univalent.
Since $\CATU$ is univalent [{\cite[Prop.~3.19]{DBLP:journals/mscs/AhrensFMVW21}}, \coqident{Bicategories.Core.Examples.BicatOfUnivCats}{univalent_cat_is_univalent_2}], the task of proving $\MONCATU$ univalent therefore reduces to showing that $\Sigma_{\Sigma_{\CATTU} \CATUA} \CATP$ from the previous section is univalent, restricted to the full sub-bicategory $\CATU$ of $\CAT$. (This is to be read modulo the repackaging hinted to in \cref{moncat-sigma-construction-remark}.)

The sigma construction of univalent displayed bicategories is univalent provided that both displayed bicategories are locally groupoidal and locally propositional [{\cite[Prop.~7.9]{DBLP:journals/mscs/AhrensFMVW21}}, \coqident{Bicategories.DisplayedBicats.Examples.Sigma}{sigma_disp_univalent_2_with_props}].
The previously defined displayed bicategories are locally propositional since they either express an (indexed) equality of morphisms or the type of $2$-cells is the unit type.
Thus in this section, we show that the displayed bicategories from \cref{sec:bicat-mono-cat-constr} are univalent and locally groupoidal.

\begin{remark}
In this section we restrict the displayed bicategories to the bicategory $\CATU$ of univalent categories. For example, the restriction of $\CATTU$ is considered as the pullback of the displayed bicategory $\CATTU$ along the inclusion of $\CATU$ into $\CAT$.
We denote the restriction of the displayed bicategory $\CATl$ by $\univCATl$ for $\ell\in\{T, U, TU, LU, RU, A, \mathit{UA}, P, S\}$.
\end{remark}

\begin{lemma}
[{\coqident[tensor_disp_is_univalent_2]{Bicategories.MonoidalCategories.UnivalenceMonCat.TensorLayer}{bidisp_tensor_disp_prebicat_is_univalent_2}}]
	$\univCATT$ is univalent.
\end{lemma}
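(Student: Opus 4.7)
The plan is to prove local and global univalence of $\univCATT$ separately and then combine them via \cref{total-bicat-univalent-lemma}, relying on two standard facts: the displayed $2$-cells of $\CATT$ are propositional (they are equations between morphisms in a category), and if $\DD$ is univalent then so is every functor category $[\CC, \DD]$ (and in particular $[\CC \times \CC, \CC]$ when $\CC$ is univalent, since $\CC \times \CC$ is then also univalent).

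For global univalence, I would fix a univalent category $\CC$ together with two tensors $\tensor, \tensor' : \CC \times \CC \to \CC$ and unfold the data of a displayed adjoint equivalence from $\tensor$ to $\tensor'$ over the identity adjoint equivalence on $\CC$. The displayed left adjoint is a natural transformation $(\id{\CC} \times \id{\CC}) \comp \tensor' \Rightarrow \tensor \comp \id{\CC}$, which simplifies to $\tensor' \Rightarrow \tensor$; the right adjoint dually gives $\tensor \Rightarrow \tensor'$; and the displayed unit, counit, and triangle identities, because they live over the identity, degenerate to the usual inverse laws. A displayed adjoint equivalence thus corresponds to an isomorphism between $\tensor$ and $\tensor'$ in the functor category $[\CC \times \CC, \CC]$. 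Since this functor category is univalent, such isomorphisms are equivalent to equalities $\tensor = \tensor'$, yielding the required equivalence.

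For local univalence, I would fix functors $F, G : \CC \to \DD$ between univalent categories, displayed $1$-cells $\pt{F}$ over $F$ and $\pt{G}$ over $G$, and an equality $p : F = G$. By path induction on $p$, it suffices to consider the case $F = G$ and $p = \mathsf{refl}$, in which $\idtoisomor{F}{G}(p) = \id{F}$. A displayed invertible $2$-cell over $\id{F}$ is then, by propositionality of the displayed $2$-cells (which automatically handles the invertibility laws), just the pointwise equation $\pt{F}_{x,y} = \pt{G}_{x,y}$, which matches the equality $\pt{F} = \pt{G}$ by function extensionality.

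The main obstacle is the unfolding step for global univalence: one must carefully verify that the triangle identities together with the displayed unit and counit over the identity adjoint equivalence on $\CC$ really do collapse into the usual inverse laws of a natural isomorphism in $[\CC \times \CC, \CC]$, rather than into some weaker higher-dimensional data. Once this correspondence is firmly established, the univalence of the functor category closes out both steps uniformly.
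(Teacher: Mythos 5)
Your proof is correct and follows the paper's strategy in all essentials: local univalence is dispatched by propositionality of the displayed $2$-cells, and global univalence by identifying displayed adjoint equivalences over the identity with natural isomorphisms between the two tensors and then trading isomorphisms for equalities via univalence. The one place where you diverge is the last step: the paper does not invoke univalence of the functor category $[\CC\times\CC,\CC]$ as a black box, but instead factors $\idtoisoob{\tensor_1}{\tensor_2}$ explicitly through two intermediate types, $\tensorEq(\tensor_1,\tensor_2)$ (pointwise equality of the object and morphism parts, an equivalence because hom-types are sets) and $\tensorIso(\tensor_1,\tensor_2)$ (the object-level equalities replaced by isomorphisms, an equivalence because $\CC$ is univalent object-by-object). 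Your appeal to univalence of $[\CC\times\CC,\CC]$ packages exactly these two steps into one standard lemma, so the two arguments have the same mathematical content; yours is slightly more modular, while the paper's explicit pointwise factorization matches the shape of the data of a displayed adjoint equivalence in $\CATT$ more directly, which is what makes the final comparison straightforward. Two small remarks: the concern you flag about the triangle identities collapsing is real but harmless, since the displayed $2$-cells of $\CATT$ are propositions, so the displayed unit and counit over the identity carry no information beyond invertibility; and combining local and global univalence into ``univalent'' is just the definition of displayed univalence, not an application of \cref{total-bicat-univalent-lemma}, which concerns the total bicategory.
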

\begin{proof}
	$\univCATT$ is locally univalent by a straightforward calculation, we therefore only discuss that it is globally univalent.

	Let $\tensor_1, \tensor_2$ be two tensors on $\CC$. 
	We have to show that $\idtoisoob{\tensor_1}{\tensor_2}$ is an equivalence of types. 
	In order to show this, we factorize this function as follows:
	\[
	\begin{tikzcd}
		{\tensor_1 = \tensor_2} 
			\arrow[d,swap, "\idtoeq"] 
			\arrow[rr, "{\idtoisoob{\tensor_1}{\tensor_2}}"]
		&& {\mathsf{DispAdjEquiv}(\tensor_1,\tensor_2)} \\
		{\tensorEq(\tensor_1,\tensor_2)} \arrow [rr, "\eqtoiso", swap]
		&& {\tensorIso(\tensor_1,\tensor_2)} \arrow[u]
	\end{tikzcd},
	\]	
	where \(\tensorEq(\tensor_1,\tensor_2)\) is the type 
	\[
	\sum_{\alpha : {\prod}_{x,y : \CC}, x \tensor_1 y = x \tensor_2 y}
		\prod_{f : \mor{\CC}{x_1}{x_2}} \prod_{g : \mor{\CC}{y_1}{y_2}}
		f\tensor_1 g = f \tensor_2 g\enspace,
	\]
	where the equality $f\tensor_1 g = f \tensor_2 g$ is dependent over $\alpha_{x_1,y_1}$ and $\alpha_{x_2,y_2}$.
	
	The type \(\tensorIso(\tensor_1,\tensor_2)\) is the same as \(\tensorEq(\tensor_1,\tensor_2)\) where we replaced the first equality by an isomorphism (and the dependent equality of morphisms is replaced by pre- and post-composing with the isomorphism).
	
	The function $\idtoeq : \tensor_1 = \tensor_2 \to \tensorEq(\tensor_1,\tensor_2)$ maps equality to pointwise equality (on both the objects and morphisms). Because our hom-types are sets, this is an equivalence. 
	The function $\eqtoiso : \tensorEq(\tensor_1,\tensor_2) \to \tensorIso(\tensor_1,\tensor_2)$ replaces identity by isomorphism. Since $\CC$ is a univalent category, $\eqtoiso$ is indeed an equivalence. 
	Since a displayed adjoint equivalence in $\CATT$ translates into the notion of \(\tensorIso(\tensor_1,\tensor_2)\), 
	we construct in a straightforward manner a function from \(\tensorIso(\tensor_1,\tensor_2)\) to $\mathsf{DispAdjEquiv}(\tensor_1,\tensor_2)$,
	which is for the same reason an equivalence.
\end{proof}

Each type of (displayed) $2$-cells in $\CATUnit$ is contractible, hence:
\begin{lemma}
[{\coqident[tensor_disp_locally_groupoidal]{Bicategories.MonoidalCategories.UnivalenceMonCat.TensorLayer}{bidisp_tensor_disp_locally_groupoid}}]
\label{lemma:univCATT_groupoidal}
$\univCATT$ is locally groupoidal.
\end{lemma}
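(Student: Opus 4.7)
The plan is to unfold the definition of local groupoidality for $\univCATT$: given an invertible 2-cell $\alpha : F \Rightarrow G$ in $\CATU$ between functors $F, G : \CC \to \DD$, and a displayed 2-cell $\sigma$ over $\alpha$ between displayed morphisms $\pt{F}, \pt{G}$ (witnesses of tensor-preservation), I must produce a displayed 2-cell $\tau$ over $\alpha^{-1}$ from $\pt{G}$ to $\pt{F}$ and verify that $\sigma$ and $\tau$ compose to the displayed identities over the identity 2-cells.

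First I would observe that the type of displayed 2-cells in $\CATT$ is a proposition: by \cref{disp-bicat-tensor-definition}, such a 2-cell is a (pointwise) equality between morphisms in $\DD$, and hom-types in a category are sets. Consequently, the invertibility axioms for $\tau$ relative to $\sigma$ hold automatically as soon as $\tau$ is exhibited, because any two parallel displayed 2-cells over a given base 2-cell coincide.

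The substantive step is therefore the construction of $\tau$. By assumption, $\sigma$ witnesses
\[
(\alpha_x \tensor_\DD \alpha_y) \comp \pt{G}_{x, y} = \pt{F}_{x, y} \comp \alpha_{x \tensor_\CC y}
\]
for all $x, y : \CC$. Since $\alpha$ is invertible in the functor category, each component $\alpha_{x \tensor_\CC y}$ is invertible, and by functoriality of $\tensor_\DD$ the morphism $\alpha_x \tensor_\DD \alpha_y$ is invertible with inverse $\alpha^{-1}_x \tensor_\DD \alpha^{-1}_y$. Pre-composing the displayed equation with $\alpha^{-1}_x \tensor_\DD \alpha^{-1}_y$ and post-composing with $\alpha^{-1}_{x \tensor_\CC y}$ rearranges it into
\[
(\alpha^{-1}_x \tensor_\DD \alpha^{-1}_y) \comp \pt{F}_{x, y} = \pt{G}_{x, y} \comp \alpha^{-1}_{x \tensor_\CC y},
\]
which is precisely the condition defining a displayed 2-cell $\tau : \pt{G} \Rightarrow \pt{F}$ over $\alpha^{-1}$.

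I do not expect a genuine obstacle here: the argument is an algebraic rearrangement whose only subtlety is identifying the inverse of $\alpha_x \tensor_\DD \alpha_y$ with the tensor of the pointwise inverses, and the composition laws come for free from propositionality of displayed 2-cells. The same pattern will recur for the other displayed bicategories in \cref{fig:constructionmoncat} whose 2-cells are propositional, and explains the similar forthcoming results (such as \cref{lemma:univCATUnit_groupoidal}).
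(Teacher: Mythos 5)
Your argument is correct and is essentially the paper's own proof, just spelled out in more detail: the paper likewise reduces local groupoidality to showing that the inverse of a tensor-preserving natural isomorphism is again tensor-preserving, which follows because the tensor of isomorphisms is an isomorphism (functoriality of $\tensor_\DD$), with the remaining invertibility axioms automatic from propositionality of the displayed $2$-cells. No discrepancies to report.
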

\begin{proof}
$\univCATT$ being locally groupoidal means that if a natural isomorphism $\alpha$ preserves the tensor, then so does its inverse.
This is immediate since the tensor product of isomorphisms is again an isomorphism (by functoriality of the tensor).
\end{proof}

\begin{lemma}
[{\coqident[unit_disp_is_univalent_2]{Bicategories.MonoidalCategories.UnivalenceMonCat.UnitLayer}{bidisp_unit_disp_prebicat_is_univalent_2}}]
	$\univCATUnit$ is univalent.
\end{lemma}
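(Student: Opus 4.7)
The plan is to mirror the structure of the proof for $\univCATT$, separating local and global univalence. For local univalence, the key observation is that the type of displayed $2$-cells in $\CATUnit$ over $\alpha : F \Rightarrow G$ is the proposition $\pu{F} \comp \alpha_{\unit_\CC} = \pu{G}$. In particular, a displayed $2$-cell over the identity $2$-cell on $F$ from $\pu{F}$ to $\pu{G}$ is exactly an equation $\pu{F} = \pu{G}$, so displayed $2$-cells between morphisms over $F$ are invertible iff they are inhabited. Consequently the map from $\pu{F} = \pu{G}$ to displayed isomorphisms $\pu{F} \cong_{\id{}} \pu{G}$ is a map between propositions that is clearly a logical equivalence, hence an equivalence of types.

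For global univalence, I would follow the same factorization technique as in the proof for $\univCATT$. Fix a univalent category $\CC$ and two units $\unit_1, \unit_2 : \ob\CC$. The plan is to factor the canonical map $\idtoisoob{\unit_1}{\unit_2}$ as
\[
(\unit_1 = \unit_2) \xrightarrow{\idtoiso^\CC} (\unit_1 \cong_\CC \unit_2) \to \mathsf{DispAdjEquiv}(\unit_1, \unit_2),
\]
where the first map is an equivalence because $\CC$ is univalent. The second map sends an iso $g : \unit_1 \to \unit_2$ in $\CC$ to the displayed adjoint equivalence over $\id{\CC}$ whose underlying displayed $1$-cell (from $\unit_1$ to $\unit_2$) is $g^{-1} : \unit_2 \to \unit_1$, with pseudo-inverse $g : \unit_1 \to \unit_2$ (regarded as a displayed $1$-cell from $\unit_2$ to $\unit_1$). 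All required unit, counit, and triangle data are $2$-cells, which are propositions here and hence determined uniquely; they hold precisely because $g$ and $g^{-1}$ are mutually inverse in $\CC$.

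To show the second map is an equivalence, I would construct its inverse directly: a displayed adjoint equivalence between $\unit_1$ and $\unit_2$ over $\id{\CC}$ (equipped with its identity adjoint-equivalence structure) consists of displayed $1$-cells $\unit_2 \to \unit_1$ and $\unit_1 \to \unit_2$ in $\CC$, plus invertibility witnesses that, after unfolding the definition of displayed $2$-cells in $\CATUnit$, amount to the two inverse laws in $\CC$. Thus the data and properties match on the nose with those of an isomorphism in $\CC$, yielding a quasi-inverse; the two round-trip equalities follow from propositional truncation of the invertibility data. Finally, checking that the composite agrees with $\idtoisoob{\unit_1}{\unit_2}$ reduces, via the usual path-induction argument, to verifying the case $\unit_1 = \unit_2$ with equal to reflexivity, where both sides produce the identity displayed adjoint equivalence.

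The main obstacle, as for $\univCATT$, lies not in any single calculation but in unfolding the definition of a displayed adjoint equivalence in $\CATUnit$ and recognizing it as an iso in $\CC$ pointing in the expected direction; the propositional nature of displayed $2$-cells in $\CATUnit$ keeps all the coherence obligations trivial, so once the data is matched up, the proof essentially amounts to the univalence of $\CC$.
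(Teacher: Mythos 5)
Your proposal is correct and follows essentially the same route as the paper: local univalence by the propositional nature of the displayed $2$-cells, and global univalence by factoring $\idtoisoob{I}{J}$ through $I \cong J$, using univalence of $\CC$ for the first leg and the identification of displayed adjoint equivalences in $\CATUnit$ with isomorphisms in $\CC$ for the second. Your extra care about the contravariant direction of the displayed $1$-cells (sending $g$ to $g^{-1}$) is a detail the paper elides but is exactly what the formalization must handle.
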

\begin{proof}
	$\univCATUnit$ is locally univalent by a straightforward calculation.
	Therefore, we only discuss why it is globally univalent.

	Let $I,J : \CC$ be objects representing a unit object. 
	As with the tensor layer, we factorize ${\idtoisoob{I}{J}}$ and show that each function in the factorization is an equivalence. The factorization is given by:	
	\[
	\begin{tikzcd}
		{I = J} 
			\arrow[dr] 
			\arrow[rr, "{\idtoisoob{I}{J}}"]
		&& {\mathsf{DispAdjEquiv}(I,J)} \\
		& {I\cong J}
			\arrow[ru] &		
	\end{tikzcd}
	\]
	The definition of a displayed adjoint equivalence in this displayed bicategory translates precisely to an isomorphism in the underlying category $\CC$, which gives us the arrow to the right and a proof that it is an equivalence. The left arrow is given by $\idtoiso_{I,J}$ and is an equivalence precisely because $\CC$ is a univalent category.
\end{proof}

\begin{lemma}
[{\coqident[unit_disp_locally_groupoidal]{Bicategories.MonoidalCategories.UnivalenceMonCat.UnitLayer}{bidisp_unit_disp_locally_groupoid}}]
\label{lemma:univCATUnit_groupoidal}
$\univCATUnit$ is locally groupoidal.
\end{lemma}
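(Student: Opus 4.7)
The plan is to unfold the defining data of a displayed $2$-cell in $\univCATUnit$ and observe that the required equation is obtained from the given one by a single post-composition with an inverse, so no genuine difficulty arises.

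Concretely, fix categories $\CC,\DD$, functors $F,G : \CC \to \DD$, units $\unit_\CC, \unit_\DD$, unit-preservation witnesses $\pu{F} : \mor{\DD}{\unit_\DD}{F\unit_\CC}$ and $\pu{G} : \mor{\DD}{\unit_\DD}{G\unit_\CC}$, and an invertible natural transformation $\alpha : F \Rightarrow G$ with inverse $\alpha^{-1}$. By \cref{disp-bicat-unit-definition}, a displayed $2$-cell over $\alpha$ from $\pu{F}$ to $\pu{G}$ is a proof that $\pu{F} \comp \alpha_{\unit_\CC} = \pu{G}$, and local groupoidality requires us to produce, from such a proof, a proof that $\pu{G} \comp (\alpha^{-1})_{\unit_\CC} = \pu{F}$, since this is the displayed $2$-cell equation over $\alpha^{-1}$ from $\pu{G}$ to $\pu{F}$.

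The key step is to use that the component of the inverse at $\unit_\CC$ is the inverse of the component, \ie $(\alpha^{-1})_{\unit_\CC} = (\alpha_{\unit_\CC})^{-1}$, which holds because inverses of natural isomorphisms are computed pointwise. Post-composing both sides of the hypothesis $\pu{F} \comp \alpha_{\unit_\CC} = \pu{G}$ with $(\alpha_{\unit_\CC})^{-1}$ and simplifying using $\alpha_{\unit_\CC} \comp (\alpha_{\unit_\CC})^{-1} = \id{F\unit_\CC}$ yields $\pu{F} = \pu{G} \comp (\alpha^{-1})_{\unit_\CC}$, which is exactly (up to symmetry of equality) the required equation.

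There is no real obstacle here: the displayed $2$-cells are propositions, so one only has to exhibit \emph{some} inhabitant, and the argument above is an elementary rearrangement in the hom-set $\mor{\DD}{\unit_\DD}{F\unit_\CC}$. The statement is, in spirit, the analogue for units of \cref{lemma:univCATT_groupoidal}, but even simpler because no functoriality of the tensor needs to be invoked.
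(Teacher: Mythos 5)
Your proof is correct and is exactly the elementary argument the paper leaves implicit: the paper states this lemma without a proof body, relying on the observation (made just before the tensor-layer analogue, \cref{lemma:univCATT_groupoidal}) that the displayed $2$-cells here are proofs of a proposition, so one only needs to produce the equation over $\alpha^{-1}$, which follows by post-composing with the pointwise inverse as you do. No difference in approach worth noting.
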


\begin{lemma}
[{\coqident[assunitors_disp_is_univalent_2]{Bicategories.MonoidalCategories.UnivalenceMonCat.AssociatorUnitorsLayer}{bidisp_assunitors_is_disp_univalent_2}}]
$\univCATUA$ is univalent.
\end{lemma}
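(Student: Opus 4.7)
The plan is to establish univalence of each of the three factor displayed bicategories $\univCATLU$, $\univCATRU$, and $\univCATA$ separately, and then to conclude using the fact that the product of univalent displayed bicategories is univalent. Since $\univCATUA = \univCATLU \times \univCATRU \times \univCATA$ over $\int \univCATTU$, both local and global univalence transport along this product.

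Each of these three displayed bicategories is obtained by the cell-unit lifting of a displayed category (\coqident{Bicategories.DisplayedBicats.Examples.DisplayedCatToBicat}{disp_cell_unit_bicat}), so the displayed $2$-cells are trivial and the displayed morphisms are proofs of propositions. Local univalence is then immediate: on either side of the local $\idtoiso$ map the type is contractible (a proposition that is trivially inhabited), so the map is an equivalence.

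For global univalence I would first treat $\univCATLU$, the cases of $\univCATRU$ and $\univCATA$ being entirely analogous. Given $(\CC,\tensor,\unit) \in \int\univCATTU$ and two left unitors $\lu^1, \lu^2$ over it, following the strategy used for $\univCATT$ and $\univCATUnit$, I would factor $\idtoisoob{\lu^1}{\lu^2}$ as
\[
\lu^1 = \lu^2 \;\to\; {\prod}_{x : \CC}\bigl(\lu^1_x = \lu^2_x\bigr) \;\to\; \mathsf{DispAdjEquiv}(\lu^1, \lu^2).
\]
The leftmost arrow is an equivalence by function extensionality together with the fact that being a natural isomorphism is a proposition (so equality of natural isomorphisms reduces to pointwise equality of their components). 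For the right arrow: because $2$-cells in the cell-unit lifting are trivial, a displayed adjoint equivalence over the identity morphism of $(\CC,\tensor,\unit)$ in $\int\CATTU$ reduces to a pair of displayed morphisms in both directions, and unfolding the defining equation from \cref{disp-bicat-lunit-definition-morphisms} at the identity triple it collapses precisely to $\lu^1_x = \lu^2_x$ for every $x$. Both types are then propositions with the same content, so the map is an equivalence.

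The same pattern handles $\univCATRU$ via the equation of \cref{disp-bicat-runit-definition} and $\univCATA$ via the equation of \cref{disp-bicat-ass-definition}; in each case, at the identity triple, the structural transport, the preservation witnesses $\pt{F}$ and $\pu{F}$, and the action of $F$ on the structural isomorphism all degenerate to identities, so the compatibility collapses to raw pointwise equality of the right unitors, respectively associators. The main obstacle is not mathematical depth but careful bookkeeping: one has to unfold what a displayed adjoint equivalence is inside the cell-unit lifting, and verify in each of the three cases that at the identity triple of $\int\CATTU$ the compatibility equation really does collapse to the raw pointwise equality of the structural isomorphisms. Once this collapse is in hand, the three arguments are routine and can be assembled with the product preservation of univalence to yield univalence of $\univCATUA$.
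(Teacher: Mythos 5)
Your proof is correct and follows essentially the same route as the paper: reduce to the three factors via product preservation of univalence, dispose of local univalence by triviality of the displayed $2$-cells and propositionality of the displayed $1$-cells, and for global univalence observe that a displayed adjoint equivalence over the identity collapses (since the preservation witnesses of the identity are identities) to pointwise equality of the unitors, respectively associators. The only cosmetic difference is that the paper packages the last step by invoking a ready-made criterion (Prop.~7.10 of the displayed-bicategories paper, requiring only a map from displayed adjoint equivalences back to displayed equalities), whereas you build the factorization through $\prod_{x}(\lu^1_x=\lu^2_x)$ and check each leg by hand using propositionality on both sides; both arguments rest on the same collapse.
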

\begin{proof}

Since the product of univalent displayed bicategories is univalent, it remains to show that $\univCATLU$,$\univCATRU$ and $\univCATA$ are univalent.

These displayed bicategories are locally univalent because the type of (displayed) $2$-cells is the unit type and the type of (displayed) $1$-cells is a proposition.

Since the type of objects (\resp morphisms, 2-cells) is a set (\resp proposition, contractible) and the base category is locally univalent, we can apply \cite[Prop.~7.10]{DBLP:journals/mscs/AhrensFMVW21}.
This proposition asserts that a displayed bicategory is univalent if a function of type $(a \simeq_{\idtoiso^{2,0}(p)} b) \to (a =_p b)$ can be constructed.
The latter means precisely that we have to construct displayed morphisms over an identity morphism.
In the case of the left unitor, this means that we have to construct a term of type $(\lu_1 = \lu_2)$ provided that the identity morphism on $(\CC,\tensor,\unit)$ preserves the left unitor (as in \cref{disp-bicat-lunit-definition}.\ref{disp-bicat-lunit-definition-morphisms}).
The proofs that $\univCATRU$ and $\univCATA$ are univalent is analogous.
\end{proof}

\begin{lemma}
[{\coqident[assunitors_disp_locally_groupoidal]{Bicategories.MonoidalCategories.UnivalenceMonCat.AssociatorUnitorsLayer}{bidisp_assunitors_disp_locally_groupoid}}]
\label{lemma:univCATUA_groupoidal}
$\univCATUA$ is locally groupoidal.
\end{lemma}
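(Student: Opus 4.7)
The plan is to reduce the statement to its three factors and then dispatch each by the triviality of its $2$-cells. Recall from \cref{sec:bicat-mono-cat-constr} that $\univCATUA$ is defined as the product $\univCATLU \times \univCATRU \times \univCATA$ over $\int \univCATTU$. Local groupoidality is preserved under products of displayed bicategories: a displayed $2$-cell in a product is a tuple of displayed $2$-cells in the factors, and it is invertible exactly when each component is invertible, with the inverse formed componentwise. So it suffices to show that each of $\univCATLU$, $\univCATRU$, and $\univCATA$ is locally groupoidal.

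For each of these three displayed bicategories, the construction goes through the generic operation that turns a displayed category into a displayed bicategory by equipping it with the unit type as its type of displayed $2$-cells (the \texttt{disp\_cell\_unit\_bicat} construction mentioned in the excerpt). Hence, over any invertible $2$-cell $\alpha$ in $\int \univCATTU$, a displayed $2$-cell is the unique inhabitant of the unit type, and a candidate displayed inverse is likewise forced to be that same unique inhabitant. The two displayed inverse laws then become equalities in a contractible type and hold automatically. This is exactly the definition of local groupoidality.

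The proof is therefore essentially bookkeeping: combine the general fact that the product of locally groupoidal displayed bicategories is locally groupoidal with the general fact that any displayed bicategory with contractible $2$-cell types is locally groupoidal. The only mild obstacle is ensuring that these two generic lemmas are available in the library; once invoked, the present claim is immediate and requires no computation specific to unitors or associators.
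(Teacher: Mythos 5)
Your proof is correct and is essentially identical to the paper's own argument: the paper also reduces to the two generic facts that a product of locally groupoidal displayed bicategories is locally groupoidal and that any displayed bicategory with unit-type $2$-cells is locally groupoidal. Both lemmas are indeed available in the library, so nothing further is needed.
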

\begin{proof}
This follows from the following lemmas:
\begin{enumerate}
\item The product of locally groupoidal displayed bicategories is locally groupoidal.
\item A displayed bicategory whose type of displayed $2$-cells is the unit is locally groupoidal.
\end{enumerate}
\end{proof}

A full displayed sub-bicategory of a univalent displayed bicategory is univalent, hence:
\begin{lemma}
[{\coqident[tripent_disp_is_univalent_2]{Bicategories.MonoidalCategories.UnivalenceMonCat.FinalLayer}{disp_bicat_tripent_is_univalent_2}}]
\label{ucatp-univ-lemma}
$\univCATP$ is univalent.
\end{lemma}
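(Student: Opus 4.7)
The plan is to reduce the claim to the previously established univalence of $\univCATUA$, by exploiting that $\CATP$ is constructed as a full displayed sub-bicategory of $\CATUA$ cut out by a proposition-valued predicate. First I would verify that the cutting-out predicate is indeed proposition-valued: by \cref{disp-bicat-mon-definition}, the only additional data above a displayed object $(\lu^\CC, \ru^\CC, \ass^\CC)$ of $\CATUA$ consists of the triangle and pentagon identities, with no extra data at the level of displayed morphisms or $2$-cells. Since $\CC$ is a category and its hom-types are therefore sets, both identities are propositions, and so is their conjunction.

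Next, I would invoke the standard result (informally stated in the line immediately preceding the lemma) that taking a full displayed sub-bicategory of a univalent displayed bicategory along a proposition-valued predicate on displayed objects yields again a univalent displayed bicategory, both locally and globally. Combined with univalence of $\univCATUA$ from the preceding lemma, this immediately gives univalence of $\univCATP$. The restriction $\restrict{(-)}$ to the full sub-bicategory $\CATU$ of $\CAT$ is defined by pullback along the inclusion and commutes with the full sub-bicategory construction, so no further bookkeeping is required.

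I do not anticipate any substantial obstacle. The only point requiring attention is that the triangle and pentagon constraints must be treated as a proposition-valued predicate on the displayed objects of $\univCATUA$, with trivial added data at the morphism and $2$-cell levels; once recognised as such, both local and global univalence transfer mechanically, without any further explicit computation.
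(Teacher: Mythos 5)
Your proposal is correct and matches the paper's argument exactly: the paper proves this lemma in one line by citing that a full displayed sub-bicategory of a univalent displayed bicategory is univalent, applied to the univalence of $\univCATUA$. Your additional check that the triangle and pentagon constraints are proposition-valued is the right (implicit) justification for why the full sub-bicategory construction applies.
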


Since a full displayed sub-bicategory of a displayed locally groupoidal bicategory is locally groupoidal, we have that $\univCATP$ is locally groupoidal.

\begin{theorem}
[\coqident{Bicategories.MonoidalCategories.UnivalenceMonCat.FinalLayer}{UMONCAT_is_univalent_2}]
\label{umoncat-univ-theorem}
The bicategory of univalent monoidal categories, lax monoidal functors, and monoidal natural transformations is univalent.
\end{theorem}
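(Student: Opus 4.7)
The plan is to combine the pieces assembled in this section by one application of \cref{total-bicat-univalent-lemma}. Via the repackaging hinted at in \cref{moncat-sigma-construction-remark}, the bicategory $\MONCATU$ is the total bicategory of a displayed bicategory over the univalent bicategory $\CATU$, where this displayed bicategory is obtained by applying the sigma construction to $\univCATTU$, $\univCATUA$, and $\univCATP$. Since $\CATU$ is univalent by \cite[Prop.~3.19]{DBLP:journals/mscs/AhrensFMVW21}, it suffices by \cref{total-bicat-univalent-lemma} to show that the displayed bicategory assembled from these layers is univalent.

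The key step is then to invoke the sigma criterion \cite[Prop.~7.9]{DBLP:journals/mscs/AhrensFMVW21}, which asks each constituent displayed bicategory to be univalent, locally groupoidal, and locally propositional. Univalence of the layers has been established piecewise: $\univCATT$ and $\univCATUnit$ earlier in this section, their product $\univCATTU$ then $\univCATUA$ by the product construction, and finally $\univCATP$ as a full displayed sub-bicategory of $\univCATUA$ (\cref{ucatp-univ-lemma}). Local groupoidality comes from \cref{lemma:univCATT_groupoidal,lemma:univCATUnit_groupoidal,lemma:univCATUA_groupoidal}, together with the fact that passing to a full displayed sub-bicategory preserves local groupoidality. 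Local propositionality is an inspection: displayed $2$-cells in $\univCATT$ and $\univCATUnit$ are equalities between morphisms of a univalent category (hence propositions since hom-types are sets), while displayed $2$-cells in the unitor and associator layers, and hence in $\univCATP$, are of unit type by construction.

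The real technical work has been carried out in the preceding lemmas, most notably in establishing global univalence of $\univCATT$, where one has to unfold $\idtoisoob{\tensor_1}{\tensor_2}$ through the intermediate types $\tensorEq$ and $\tensorIso$. With those ingredients in hand, the main obstacle that remains here is purely organisational: to check that the criteria of the sigma construction apply uniformly across all layers, so that the glued displayed bicategory over $\CATU$ is univalent. The theorem then follows by one final application of \cref{total-bicat-univalent-lemma} to take the total bicategory.
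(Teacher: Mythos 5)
Your proposal is correct and follows essentially the same route as the paper: reduce via \cref{total-bicat-univalent-lemma} to the univalence of the displayed bicategory over the univalent base $\CATU$, assemble it via the sigma construction, and verify the criteria of \cite[Prop.~7.9]{DBLP:journals/mscs/AhrensFMVW21} (univalence, local groupoidality, local propositionality) layer by layer using the preceding lemmas. The paper leaves exactly this assembly implicit after establishing the same constituent facts, so there is nothing to add.
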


\begin{lemma}
[{\coqident[UMONCAT_disp_strong_is_univalent_2]{Bicategories.MonoidalCategories.UnivalenceMonCat.FinalLayer}{disp_bicat_univstrongfunctor_is_univalent_2}}]
$\univCATS$ is univalent.
\end{lemma}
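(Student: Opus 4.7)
My plan is to reduce the univalence of $\univCATS$ to that of $\univCATP$, established in \cref{ucatp-univ-lemma}, by exploiting the fact that $\CATS$ arises from $\CATP$ by restricting the displayed morphisms along the proposition $\isiso{\pu{}} \times \prod_{x,y : \CC} \isiso{\pt{}_{x,y}}$; the displayed objects and displayed $2$-cells are unchanged.

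For local univalence, I would argue as follows. Given displayed morphisms $\f, \g$ in $\univCATS$ over a common $f$, the identity type $\f = \g$ is canonically equivalent to the identity type of their underlying morphisms in $\univCATP$, since the added strongness data lives in a proposition; and the displayed invertible $2$-cells between $\f$ and $\g$ in $\univCATS$ coincide with those in $\univCATP$, because both the displayed $2$-cells and the composition law are unchanged by the restriction. It follows that the canonical $\idtoiso$-map for $\univCATS$ factors through its counterpart for $\univCATP$, which is already an equivalence by local univalence of $\univCATP$, and hence $\univCATS$ is locally univalent.

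For global univalence, the displayed objects of $\univCATS$ and $\univCATP$ coincide, so it suffices to show that the forgetful map from displayed adjoint equivalences in $\univCATS$ to those in $\univCATP$ is an equivalence of types. Injectivity is immediate from the propositional nature of the added strongness data; surjectivity is the substantive content, amounting to the claim that every displayed adjoint equivalence in $\univCATP$ over an adjoint equivalence in $\CATU$ has its morphism part automatically strong.

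The main obstacle is precisely this automatic-strongness step. Given data $(F,\pt{F},\pu{F})$, $(G,\pt{G},\pu{G})$, and invertible monoidal unit and counit $\eta, \epsilon$ assembling into a displayed adjoint equivalence in $\univCATP$, I would invert $\pt{F}_{x,y}$ and $\pu{F}$ by combining the monoidality equations satisfied by $\eta$ and $\epsilon$---which express these laxators in terms of $\pt{G}, \pu{G}$ and components of the invertible $2$-cells $\eta, \epsilon$---with the triangle identities of the adjoint equivalence. Once automatic strongness is established, surjectivity of the forgetful map, and thereby global univalence of $\univCATS$, follows from global univalence of $\univCATP$.
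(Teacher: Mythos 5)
Your proposal is correct in substance but takes a much heavier route than the paper, whose entire proof is the observation that the conclusion is immediate from \cref{ucatp-univ-lemma} because the added displayed data on $1$-cells (the strongness condition) is a mere proposition. Your treatment of local univalence matches this spirit and is fine. For global univalence, however, you identify ``automatic strongness of adjoint equivalences'' as the substantive obstacle and propose to prove it by a direct doctrinal-adjunction-style computation with the monoidality of $\eta,\epsilon$ and the triangle identities. That computation is a true fact and would eventually go through, but it is unnecessary: since both $\idtoiso^{2,0}$ maps send $\mathsf{refl}$ to the identity displayed adjoint equivalence (whose underlying $1$-cells are the displayed identities, which are strong), the forgetful map $U$ from displayed adjoint equivalences in $\univCATS$ to those in $\univCATP$ satisfies $U \comp \idtoisoob{\x}{\y}^{\univCATS} = \idtoisoob{\x}{\y}^{\univCATP}$; the right-hand side is an equivalence by \cref{ucatp-univ-lemma}, so $U$ is split surjective, and being an embedding (strongness is a proposition) it is an equivalence, whence so is $\idtoisoob{\x}{\y}^{\univCATS}$. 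In other words, the ``automatic strongness'' you set out to prove by hand is a corollary of the univalence of $\univCATP$ rather than a prerequisite for the univalence of $\univCATS$; reorganising the argument this way collapses your main obstacle to two lines and recovers the paper's proof.
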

\begin{proof}
This follows immediately from \cref{ucatp-univ-lemma} since the type of displayed $1$-cells is a mere proposition.
\end{proof}

\begin{theorem}
[\coqident{Bicategories.MonoidalCategories.UnivalenceMonCat.FinalLayer}{UMONCAT_strong_is_univalent_2}]
The bicategory of univalent monoidal categories, strong monoidal functors, and monoidal natural transformations is univalent.
\end{theorem}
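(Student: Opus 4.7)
The plan is to mirror the structure of the proof of \cref{umoncat-univ-theorem}, but with $\univCATS$ substituted for $\univCATP$. The bicategory in question is obtained by the same sigma-repackaging described in \cref{moncat-sigma-construction-remark}, yielding a displayed bicategory on top of $\CATU$ whose total bicategory is exactly the bicategory of univalent monoidal categories, strong monoidal functors, and monoidal natural transformations.

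First I would invoke \cref{total-bicat-univalent-lemma}, which reduces the global and local univalence of a total bicategory to the corresponding univalence of the base and of the displayed bicategory. The base $\CATU$ is univalent by \cite[Prop.~3.19]{DBLP:journals/mscs/AhrensFMVW21}, as already used in the lax case. The displayed bicategory layered over $\CATU$ that carves out the strong monoidal structure is precisely $\univCATS$, whose univalence has just been established in the immediately preceding lemma.

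The only subtlety, as in the lax case, is the repackaging from the layered form $(((\CC,\tensor,\unit),\lu,\ru,\ass),tri,pent,\mathit{strong})$ into the expected form of a category equipped with a (strong) monoidal structure. This is handled by the sigma construction of \cite[Def.~6.6(2)]{DBLP:journals/mscs/AhrensFMVW21}, and univalence is preserved provided each layer is locally propositional and locally groupoidal. Local propositionality holds layer-by-layer (each displayed 2-cell type is either a unit type or an equation between morphisms), and local groupoidality at each intermediate layer has been verified in \cref{lemma:univCATT_groupoidal,lemma:univCATUnit_groupoidal,lemma:univCATUA_groupoidal}; for $\univCATS$ itself it is inherited, since the strongness predicate only adds propositional data to the displayed 1-cells and leaves the 2-cells unchanged from $\univCATP$, which is locally groupoidal as a full sub-bicategory of a locally groupoidal one.

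I do not expect a genuine obstacle here: every ingredient is already in place in the preceding lemmas, and the theorem is essentially a one-line application of \cref{total-bicat-univalent-lemma} combined with the preceding univalence result for $\univCATS$. The closest thing to a subtlety is making sure the restriction-to-$\CATU$ conventions of \cref{sec:bicat-mono-categ} and the sigma repackaging of \cref{moncat-sigma-construction-remark} line up with the informal statement ``univalent monoidal categories with strong monoidal functors'', but this is the same bookkeeping already carried out for the lax version in \cref{umoncat-univ-theorem}.
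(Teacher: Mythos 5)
Your proposal is correct and follows essentially the same route as the paper: the theorem is obtained by combining the univalence of the base $\CATU$, the univalence of $\univCATS$ (established in the preceding lemma as a consequence of \cref{ucatp-univ-lemma}, since the strongness condition on displayed $1$-cells is a mere proposition), and the sigma/total-bicategory machinery already set up for \cref{umoncat-univ-theorem}. No gap.
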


\section{The Rezk completion for monoidal categories}
\label{sec:rezk-compl-mono}

Some constructions of (monoidal) categories do not yield univalent (monoidal) categories.
For instance, categories built from syntax usually have \emph{sets} of objects; the presence of non-trivial isomorphisms in such a category hence entails that it is not univalent.
Another example is when constructing colimits of univalent monoidal categories; the usual construction of such a colimit often yields a non-univalent monoidal category.
In such cases, a ``completion operation'', turning a monoidal category into a univalent one, is handy.

In this \lcnamecref{sec:rezk-compl-mono} we construct, for each monoidal category, a free univalent monoidal category, which we call the \emph{monoidal Rezk completion}. 
More precisely, we solve the following problem: 

\begin{problem}\label{prop:monoidal_rezk}
Given a Rezk completion $\HH:\CC\to\DD$ of a category $\CC$ and a monoidal structure $M := (\tensor, \unit, \lu,\ru, \ass)$ on $\CC$,
construct a monoidal structure $\liftstruct{M} := (\tensorD, \unitD, \luD,\ruD, \assD)$ on $\DD$
and a strong monoidal structure for $\HH$ \wrt $M$ and $\liftstruct{M}$,
such that
for any univalent monoidal category $(\EE,N)$,
the isomorphism of categories
\[
\precomp{\HH} : \mor{\CAT}{\DD}{\EE} \to \mor{\CAT}{\CC}{\EE}
\]
lifts to the category of lax (\resp strong) monoidal functors:
\[
\precomp{\HH} : \mor{\MONCAT}{(\DD,\liftstruct{M})}{(\EE,N)} \to \mor{\MONCAT}{(\CC,M)}{(\EE,N)}\enspace.
\]
\end{problem}

Once solved, we call $(\DD, \liftstruct{M})$ the \emph{monoidal Rezk completion of $(\CC, M)$}.
Analogous to the Rezk completion for categories, the monoidal Rezk completion exhibits the bicategory $\MONCATU$ (\resp $\MONCATSU$) as a reflective full sub-bicategory of $\MONCAT$ (\resp $\MONCATS$).

Although any categorical structure on a category can be transported along an equivalence of categories such that they become equivalent in the corresponding bicategory of structured categories, this might not be the case if one considers a weak equivalence.
On the way towards solving \cref{prop:monoidal_rezk}, we show, in particular, how to transport a monoidal structure along a weak equivalence of categories (see \cref{dfn:weq_transport_mon}), provided that the target category is univalent.
That construction is not limited to the specific weak equivalence given by the Rezk completion.

Analogous to the univalence proof of $\MONCATU$ (\resp $\MONCATSU$) given in \cref{sec:bicat-mono-categ}, we rely on the theory of displayed categories in order to solve this problem by dividing it into subgoals.
In each of the subgoals, we use the same strategy. 
In \cref{sec:rezk-compl-tensor}, we explain the strategy in detail for the subgoal of equipping $\DD$ (\resp $\HH : \CC \to \DD$) with a tensor (\resp tensor-preserving structure).

\subsection{The Rezk completion of a category with a tensor}
\label{sec:rezk-compl-tensor}

Let $\CC$ be a category and $\HH : \CC \to \DD$ a Rezk completion of $\CC$. 
Let $\tensor : \CC \times \CC \to \CC$ be a functor. 

In this section we equip $\DD$ with a functor $\tensorD : \DD \times \DD \to \DD$ such that

\begin{enumerate}
\item $\HH$ has the structure of a \emph{strong tensor-preserving} functor,
\ie, we have a natural isomorphism $\pt{\HH} : (\HH \times \HH) \comp \tensorD \Rightarrow \tensor \comp \HH$.
\item The precomposition functor of $(\HH,\pt{\HH})$ is an isomorphism of categories.
\end{enumerate}

\begin{definition}
[\coqident{CategoryTheory.Monoidal.RezkCompletion.LiftedTensor}{TransportedTensor},
\coqident{CategoryTheory.Monoidal.RezkCompletion.LiftedTensor}{TransportedTensorComm}]
\label{lifted-tensor-definition}
The \emph{lifted tensor} $\tensorD$ on $\DD$ 
is the (unique) functor $\tensorD : \DD\times\DD \to \DD$ such that there is a natural isomorphism as depicted in the following diagram:
\[
\begin{tikzcd}
& {\DD \times \DD} 
	\arrow[dd, Rightarrow, shorten <=10pt,shorten >=10pt, "\pt{\HH}"] 
	\arrow[rd, "\tensorD"] & \\ 
{\CC \times \CC} 
	\arrow[ru, "{\HH\times\HH}"] 
	\arrow[rd, swap, "\tensor"] 
& & \DD \\
& \CC \arrow[ru, swap, "\HH"] &
\end{tikzcd}
\]
\end{definition}

\begin{remark}
The functor $\tensorD$ is given by applying \cref{weq-induces-iso-lemma} to the weak equivalence $\HH \times \HH : \CC \times \CC \to \DD \times \DD$.
\end{remark}

\begin{remark} The natural isomorphism is labelled as $\pt{\HH}$ because this natural isomorphism is precisely the structure we need to have that $\HH$ is a (strong) tensor-preserving functor.
\end{remark}

\begin{lemma} 
[\coqident{CategoryTheory.Monoidal.RezkCompletion.LiftedTensor}{HT_eso}]
\label{precomp-tensor-eso-lemma}
Let $\EE$ be a univalent category and $\tensor_\EE : \EE\times\EE\to\EE$ be a functor.
The displayed precomposition functor (\cref{disp-precomp-functor-definition}) $\precomp{\pt{\HH}}$ with target displayed object $\tensor_\EE$ (as a displayed object in $\CATT$) is displayed split essentially surjective.
Consequently, the precomposition functor
\[
\precomp{(\HH,\pt{\HH})} : \mor{\textstyle\int\CATT\,}{(\DD,\tensorD)}{(\EE,\tensor_E)} \to \mor{\textstyle\int \CATT\,}{(\CC,\tensor)}{(\EE,\tensor_E)}
\]
is essentially surjective.
\end{lemma}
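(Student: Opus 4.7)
The plan is to leverage the universal property underlying the construction of $\tensorD$ itself, namely the full faithfulness of $\precomp{(\HH \times \HH)}$. Since $\HH$ is a weak equivalence, so is $\HH \times \HH : \CC \times \CC \to \DD \times \DD$, and since $\EE$ is univalent, \cref{weq-induces-iso-lemma} applied to $\HH \times \HH$ yields that $\precomp{(\HH \times \HH)} : [\DD \times \DD, \EE] \to [\CC \times \CC, \EE]$ is an adjoint equivalence, and hence in particular fully faithful.

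Given $G : \DD \to \EE$ and a displayed object $\pt{F}$ over $\HH \comp G$, namely a natural transformation $\pt{F} : (\HH G \times \HH G) \comp \tensor_\EE \Rightarrow \tensor \comp (\HH G)$, we first reformulate the goal. Because displayed $2$-cells in $\CATT$ are propositional, producing a displayed isomorphism over $\id{\HH \comp G}$ between $\precomp{\pt{\HH}}(\pt{G})$ and $\pt{F}$ reduces to the pointwise equation
\[
G(\pt{\HH}_{x,y}) \comp \pt{G}_{\HH x, \HH y} = \pt{F}_{x,y} \qquad \text{for all } x,y \in \CC.
\]
Equivalently, the left-whiskering $(\HH \times \HH) \lwhisker \pt{G}$ must equal the composite $\pt{F} \comp (\pt{\HH} \rwhisker G)^{-1}$, where the inverse is available because $\pt{\HH}$ is a natural isomorphism (\cref{lifted-tensor-definition}).

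By full faithfulness of $\precomp{(\HH \times \HH)}$, such a $\pt{G} : (G \times G) \comp \tensor_\EE \Rightarrow \tensorD \comp G$ exists uniquely as the preimage of this prescribed natural transformation, and by construction it satisfies the required equation. The main obstacle we anticipate is bookkeeping: carefully unfolding the composition of displayed $1$-cells in $\CATT$ to verify that $\precomp{\pt{\HH}}(\pt{G})_{x,y}$ indeed expands to $G(\pt{\HH}_{x,y}) \comp \pt{G}_{\HH x, \HH y}$, and keeping track of the identification $(\HH G \times \HH G) \comp \tensor_\EE = (\HH \times \HH) \comp (G \times G) \comp \tensor_\EE$ so that the sources and targets of the intermediate natural transformations line up.

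For the concluding statement, essential surjectivity of the total precomposition functor $\precomp{(\HH, \pt{\HH})}$ follows by a standard displayed-categorical principle: if the base functor is essentially surjective---here $\precomp{\HH}$, which is in fact an adjoint equivalence by \cref{weq-induces-iso-lemma}---and the displayed functor is displayed split essentially surjective, then the total functor is essentially surjective.
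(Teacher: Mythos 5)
Your proposal is correct and follows essentially the same route as the paper: both exploit that $\HH\times\HH$ is a weak equivalence into a univalent target, so that $\precomp{(\HH\times\HH)}$ is fully faithful, and both obtain $\pt{G}$ as the unique preimage of the pasting of the given transformation with $(\pt{\HH})^{-1}$ whiskered by $G$, deferring the verification of the resulting displayed isomorphism to bookkeeping. The only quibble is a slip in composition order in your pointwise equation (the paper composes diagrammatically, so the composite witness is $\pt{G}_{\HH x,\HH y}\comp G(\pt{\HH}_{x,y})$), which does not affect the argument.
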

\begin{proof}
Let $G : \DD \to \EE$ be a functor and $\pt{\HH \comp G}$ a natural transformation of type 
\[
(\HH\times\HH) \comp (G \times G) \comp \tensor_\EE \Rightarrow \tensor \comp \HH \comp G\enspace.
\]
witnessing that $\HH \comp G$ is a lax tensor-preserving functor. 
We have to construct a natural transformation witnessing that $G$ is a lax tensor-preserving functor, \ie, we have to define a natural transformation
\[
\pt{G} : (G \times G) \comp \tensor_\EE \Rightarrow \tensorD \comp G\enspace.
\]
Since $\HH \times \HH$ is a weak equivalence and $\EE$ is univalent, 
it suffices to define a natural transformation of type
\[
(\HH\times\HH) \comp (G \times G) \comp \tensor_\EE \Rightarrow (\HH\times\HH) \comp \tensorD \comp G\enspace.
\]
which we define as:
\[
\begin{tikzcd}
& {\DD \times \DD} \arrow[r, "G\times G"]
& {\EE \times \EE}
	\arrow[dl, Rightarrow, shorten <=10pt,shorten >=10pt, swap, "\pt{\HH \comp G}"]
	\arrow[rd, "\tensor_\EE"] & \\
{\CC \times \CC}
	\arrow[ru, "{\HH\times\HH}"]
	\arrow[r, swap, "\tensor"]
	\arrow[rd, swap, "{\HH\times\HH}"]
& \CC
	\arrow[r, swap, "\HH"]
	\arrow[d, Rightarrow, shorten <=1pt,shorten >=1pt, "(\pt{\HH})^{-1}"]
& \DD \arrow[r, swap, "G"] & \EE \\
& {\DD \times \DD} \arrow[ru, swap, "\tensorD"] &
\end{tikzcd}
\]
For a detailed proof that $\pt{\HH \comp G}$ is (displayed) isomorphic to the (displayed) composition of $\pt{\HH}$ and $\pt{G}$, we refer the reader to the formalization.
\end{proof}

\begin{lemma} 
[\coqident{CategoryTheory.Monoidal.RezkCompletion.LiftedTensor}{HT_ff}]
\label{precomp-tensor-ff-lemma}
Let $\EE$ be a univalent category and $\tensor_\EE : \EE\times\EE\to\EE$ be a functor.
The displayed precomposition functor $\precomp{\pt{\HH}}$ is displayed fully faithful.
Consequently, the precomposition functor $\precomp{(\HH,\pt{\HH})}$ between the tensor-preserving functor categories is fully faithful.
\end{lemma}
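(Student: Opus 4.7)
The plan is to establish that $\precomp{\pt{\HH}}$ is displayed fully faithful; combined with fully faithfulness of $\precomp{\HH}$ from \cref{weq-induces-iso-lemma}, this yields that the total precomposition functor $\precomp{(\HH, \pt{\HH})}$ is fully faithful. Because the displayed $2$-cells in $\CATT$ are mere propositions, displayed fully faithfulness amounts to a logical equivalence: for any tensor-preserving functors $(G, \pt{G}), (G', \pt{G'}) : (\DD, \tensorD) \to (\EE, \tensor_\EE)$ and any natural transformation $\alpha : G \Rightarrow G'$, the fact that $\alpha$ is tensor-preserving with respect to $\pt{G}, \pt{G'}$ must be equivalent to the fact that $\HH \lwhisker \alpha$ is tensor-preserving with respect to the composed structures over $\HH \comp G$ and $\HH \comp G'$. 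The forward direction is immediate from functoriality of displayed composition, so the content lies in the backward direction.

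For the backward direction, observe that both sides of the source equation assemble into natural transformations $(G \times G) \comp \tensor_\EE \Rightarrow \tensorD \comp G'$ in $[\DD \times \DD, \EE]$. Since $\HH \times \HH : \CC \times \CC \to \DD \times \DD$ is again a weak equivalence (as the product of two weak equivalences) and $\EE$ is univalent, \cref{weq-induces-iso-lemma} gives that $\precomp{(\HH \times \HH)} : [\DD \times \DD, \EE] \to [\CC \times \CC, \EE]$ is an adjoint equivalence, hence faithful. It therefore suffices to verify the source equation at pairs of the form $(\HH c_1, \HH c_2)$.

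At such a pair, unfolding the composed tensor-preserving structure $\pt{(\HH \comp G)}_{c_1, c_2} = \pt{G}_{\HH c_1, \HH c_2} \comp G(\pt{\HH}_{c_1, c_2})$ (and likewise for $G'$), the hypothesis reads
\[
(\alpha_{\HH c_1} \tensor_\EE \alpha_{\HH c_2}) \comp \pt{G'}_{\HH c_1, \HH c_2} \comp G'(\pt{\HH}_{c_1, c_2}) = \pt{G}_{\HH c_1, \HH c_2} \comp G(\pt{\HH}_{c_1, c_2}) \comp \alpha_{\HH(c_1 \tensor c_2)}.
\]
Applying the naturality of $\alpha$ at the isomorphism $\pt{\HH}_{c_1, c_2}$ rewrites the right-hand side as $\pt{G}_{\HH c_1, \HH c_2} \comp \alpha_{\HH c_1 \tensorD \HH c_2} \comp G'(\pt{\HH}_{c_1, c_2})$. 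Since $G'(\pt{\HH}_{c_1, c_2})$ is invertible, cancelling it yields precisely the source equation at $x = \HH c_1$, $y = \HH c_2$.

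The main obstacle will be the careful bookkeeping of the composition of tensor-preserving structures and the correct sequencing of naturality rewrites; conceptually, however, the proof reduces to the faithfulness of $\precomp{(\HH \times \HH)}$ on the functor category into $\EE$, which is supplied by \cref{weq-induces-iso-lemma} applied to the weak equivalence $\HH \times \HH$.
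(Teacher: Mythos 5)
Your proposal is correct and follows essentially the same route as the paper: faithfulness is free because the displayed 2-cells are propositions, and fullness is obtained by reducing the (propositional) equation to components at pairs $(\HH c_1, \HH c_2)$ and then cancelling the invertible $G'(\pt{\HH}_{c_1,c_2})$ after a naturality rewrite. The only cosmetic difference is that you justify the reduction via faithfulness of $\precomp{(\HH\times\HH)}$ on functor categories, whereas the paper invokes essential surjectivity of $\HH\times\HH$ together with the goal being a proposition — two packagings of the same fact — and you additionally spell out the final computation that the paper leaves implicit.
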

\begin{proof}
It is displayed faithful because the type stating that a natural transformation preserves a tensor is a mere proposition.
In order to show that it is displayed full, notice that we have to show an equality of morphisms, \ie, a proposition. 
Therefore, we are able to use that $\HH\times\HH$ is essentially surjective which allows us to work with objects in $\CC$ instead of $\DD$ which leads to the result.
\end{proof}

\begin{theorem} 
[\coqident{Bicategories.MonoidalCategories.MonoidalRezkCompletion}{precomp_tensor_catiso}]
\label{rezk-completion-tensor-theorem}
A category equipped with a tensor admits a Rezk completion: 
Let $(\EE,\tensor_\EE) : \int \CATT$. If $\EE$ is univalent, then
\[
\precomp{(\HH,\pt{\HH})} : \mor{\textstyle\int\CATT\,}{(\DD,\tensorD)}{(\EE,\tensor_E)} \to \mor{\textstyle\int \CATT\,}{(\CC,\tensor)}{(\EE,\tensor_E)}
\]
is an isomorphism of categories.
\end{theorem}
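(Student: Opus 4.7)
My plan is to combine the two preceding lemmas, which give that $\precomp{(\HH,\pt{\HH})}$ is fully faithful and essentially surjective, into a weak equivalence, and then upgrade this to an isomorphism of categories by invoking univalence. As recalled at the start of \cref{sec:revi-rezk-compl}, for univalent categories the four notions of sameness---weak equivalence, (strong) equivalence, adjoint equivalence, and isomorphism---coincide, so the game reduces to verifying that both hom-categories are univalent and then invoking this coincidence.

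First I would verify that the two hom-categories
\[
\mor{\textstyle\int\CATT\,}{(\DD,\tensorD)}{(\EE,\tensor_\EE)}
\quad\text{and}\quad
\mor{\textstyle\int\CATT\,}{(\CC,\tensor)}{(\EE,\tensor_\EE)}
\]
are univalent. Each is the total category of the displayed hom-category in $\CATT$ over the functor category $[\DD,\EE]$, respectively $[\CC,\EE]$. Since $\EE$ is univalent, both functor categories are univalent by a standard result. Univalence of each displayed hom-category amounts to local univalence of $\CATT$: because the displayed $2$-cells of $\CATT$ are propositional equations, a displayed invertible $2$-cell over an identity reduces directly to an equality of underlying natural transformations. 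Crucially, this calculation does not require the base categories $\CC$ or $\DD$ to be univalent, so the argument applies unmodified in our setting (where only $\EE$ is assumed univalent). Combining these via the standard principle that the total category of a univalent displayed category over a univalent base is univalent, both hom-categories above are univalent.

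With the univalence of the hom-categories in hand, \cref{precomp-tensor-ff-lemma} and \cref{precomp-tensor-eso-lemma} directly give that $\precomp{(\HH,\pt{\HH})}$ is fully faithful and essentially surjective, so it is a weak equivalence between univalent categories and hence an isomorphism of categories, as desired. The main obstacle is arguably the very last step: going from ``fully faithful plus essentially surjective'' to an equivalence of types on objects rests on a delicate argument that uses fullness to extract, from the mere existence provided by essential surjectivity, a specific preimage, together with univalence of both hom-categories to rigidify this preimage up to equality. This is exactly the constructive content of the coincidence of weak equivalences and isomorphisms for univalent categories, and it is what makes the payoff of working with univalent categories tangible here---no appeal to the axiom of choice is required.
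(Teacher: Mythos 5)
Your proposal is correct and follows essentially the same route as the paper: first establish univalence of both hom-categories (univalence of $[\CC,\EE]$ and $[\DD,\EE]$ from univalence of $\EE$, together with univalence of the displayed hom-categories coming from local univalence of the tensor layer), then combine \cref{precomp-tensor-ff-lemma} and \cref{precomp-tensor-eso-lemma} with the fact that a weak equivalence between univalent categories is an isomorphism. Your explicit remark that the local-univalence calculation for the displayed layer does not require $\CC$ or $\DD$ to be univalent is a welcome precision that the paper's own proof glosses over by citing the result for $\univCATT$.
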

\begin{proof}
First notice that both categories are univalent, indeed:
since $\EE$ is univalent, so are $\mor{\CAT}{\DD}{\EE}$ and $\mor{\CAT}{\CC}{\EE}$
and in \cref{sec:bicat-mono-categ}, we have proven that the displayed bicategory $\univCATT$ is locally univalent, \ie, the displayed hom-categories are univalent.
Hence, it suffices to show that this functor is a weak equivalence, \ie, fully faithful and essentially surjective.
Fully faithfulness can always be concluded if both the functor on the base categories and the displayed functor are.
The total functor is essentially surjective if this holds on the base and at the displayed level, provided extra information: it suffices that the base category and the displayed category are univalent.
So we conclude the result from combining the assumption that $\HH$ is a weak equivalence and lemmas \ref{precomp-tensor-ff-lemma} and \ref{precomp-tensor-eso-lemma}.
\end{proof}

\begin{remark}
The strategy introduced in this \lcnamecref{sec:rezk-compl-tensor} will be repeated in the next section, so we refer back to this section for the necessary details (if needed).
\end{remark}

\subsection{The Rezk completion of a category with a tensor and unit}
\label{sec:rezk-compl-tensor-unit}
In \cref{sec:rezk-compl-tensor}, we have shown how the structure of a tensor $\tensor$ on $\CC$ transports along a weak equivalence $\HH:\CC\to\DD$ to a tensor on a univalent category $\DD$.
Furthermore, $\HH$ has the structure of a strong tensor-preserving functor and that $(\DD,\tensorD)$ is universal in the sense that objects in $\int \CATT$ admit a Rezk completion.

In this \lcnamecref{sec:rezk-compl-tensor-unit}, we show that the same result holds when we add the choice of an object to a category, playing the role of the tensorial unit. 
This construction is trivial, but we will also discuss how we can conclude that objects in $\int \CATTU$ admit a Rezk completion.

As before, let $\HH : \CC\to \DD$ be a weak equivalence from a category $\CC$ to a univalent category $\DD$. Let $\unit : \CC$, thus $(\CC,\unit) : \int \CATUnit$. 
Clearly we have $(\HH,\id{\HH\,\unit}) : \mor{\int \CATUnit}{(\CC,\unit)}{(\DD, \HH\,\unit)}$.

To conclude that $(\DD,\HH\,\unit)$ is universal, we apply the same reasoning as in \cref{sec:rezk-compl-tensor}.
We have to show that for any $(\EE, \unit_\EE) : \CATUnit$ with $\EE$ univalent,
the displayed precomposition functor 
\[
\precomp{\id{\HH\,\unit}} : \mor{\CATUnit}{\HH\,\unit}{\unit_\EE} \to \mor{\CATUnit}{\unit}{\unit_\EE}
\]
is displayed fully faithful and displayed split essentially surjective.
We denote $\unitD := (\HH\,\unit)$ and $\pu{\HH} := \id{\unitD}$.

\begin{lemma} 
[\coqident{CategoryTheory.Monoidal.RezkCompletion.LiftedTensorUnit}{HU_eso}]
\label{precomp-unit-eso-lemma}
The displayed precomposition functor (\cref{disp-precomp-functor-definition}) $\precomp{\pu{\HH}}$ with target displayed object $\unit_\EE$ is displayed split essentially surjective.
Consequently, the precomposition functor $\precomp{(\HH, \pu{\HH})}$ with target object $(\EE, \unit_\EE)$ between unit tensor-preserving functor categories is essentially surjective.
\end{lemma}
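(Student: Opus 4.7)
The plan is to unfold the definition of the displayed precomposition functor $\precomp{\pu{\HH}}$ and observe that, thanks to the specific choices $\unitD := \HH\,\unit$ and $\pu{\HH} := \id{\unitD}$ made just before the statement, the required construction becomes essentially trivial.

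First, I would recall the shape of a displayed object in $\CATUnit$: over a functor $G : \DD \to \EE$, a morphism from $\unitD$ to $\unit_\EE$ is simply a morphism $\pu{G} : \unit_\EE \to G\,\unitD$ in $\EE$. Likewise, a displayed morphism over $\HH \comp G : \CC \to \EE$ (with displayed source $\unit$ and target $\unit_\EE$) is a morphism $\unit_\EE \to G(\HH\,\unit)$. Unfolding the composition rule for displayed morphisms in $\CATUnit$, the displayed precomposition functor $\precomp{\pu{\HH}}$ sends $\pu{G}$ to the composite $\pu{G} \comp G(\pu{\HH})$.

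Next, since $\pu{\HH} = \id{\unitD}$ and $\unitD = \HH\,\unit$ by construction, this composite reduces to $\pu{G} \comp \id{G\,\unitD} = \pu{G}$, so the displayed precomposition functor acts as the identity. Therefore, given any target displayed morphism $\pu{\HH \comp G} : \unit_\EE \to G(\HH\,\unit)$, I simply define $\pu{G} := \pu{\HH \comp G}$; the identity displayed $2$-cell then supplies the required displayed isomorphism, proving displayed split essential surjectivity.

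Finally, the corollary that $\precomp{(\HH, \pu{\HH})}$ is essentially surjective on the total hom-category is a general consequence of displayed machinery: the total functor is essentially surjective whenever both the base functor and the displayed functor over it are (displayed) split essentially surjective. The base precomposition functor $\precomp{\HH}$ is even an adjoint equivalence by \cref{weq-induces-iso-lemma}, and the displayed counterpart is what we have just established. There is no real obstacle here---the design of $\unitD$ and $\pu{\HH}$ was chosen precisely so that this step becomes trivial, in stark contrast with the tensor case (\cref{precomp-tensor-eso-lemma}) where the lifted tensor itself had to be produced by invoking the underlying precomposition equivalence.
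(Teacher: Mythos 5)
Your proposal is correct and matches the paper's argument: the paper's proof is exactly the observation that the witness $\pu{\HH}$ is an identity morphism, so the displayed precomposition functor acts trivially and the given displayed object can be taken as its own preimage. Your unfolding of the composite $\pu{G} \comp G(\pu{\HH})$ and the appeal to the general displayed machinery (with the base functor being an adjoint equivalence by \cref{weq-induces-iso-lemma}) just make explicit what the paper leaves implicit.
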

\begin{proof}
It is merely surjective since the witness, expressing that the weak equivalence preserves the unit, is an identity morphism.
\end{proof}

\begin{lemma} 
[\coqident{CategoryTheory.Monoidal.RezkCompletion.LiftedTensorUnit}{HU_ff}]
\label{precomp-unit-ff-lemma}
The displayed precomposition functor  $\precomp{\pu{\HH}}$ is displayed fully faithful.
Consequently, the precomposition functor $\precomp{(\HH, \pu{\HH})}$ between the unit-preserving functor categories is fully faithful.
\end{lemma}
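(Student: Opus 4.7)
The plan is to unfold the definition of displayed fully faithfulness and exploit the crucial fact that $\pu{\HH} = \id{\unitD}$. Because of this choice, the action of the displayed precomposition functor on 2-cells is essentially the identity on the underlying propositions, so the statement should reduce to a one-line unfolding.

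First, I would observe that in $\CATUnit$ each type of displayed $2$-cells is a proposition: by \cref{disp-bicat-unit-definition}, a displayed $2$-cell over $\alpha : F \Rightarrow G$ from $\pu{F}$ to $\pu{G}$ is a proof of $\pu{F} \comp \alpha_{\unit_\CC} = \pu{G}$, which lives in a hom-set. Therefore proving displayed fully faithfulness reduces to showing a logical equivalence between the two relevant propositions (no need to check that anything is an equivalence of types beyond that).

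Next, I would fix $G_1, G_2 : \DD \to \EE$, displayed morphisms $\pu{G_i} : \unit_\EE \to G_i(\unitD)$, and a natural transformation $\alpha : G_1 \Rightarrow G_2$, and directly compare the source and target conditions. The source proposition (a displayed $2$-cell over $\alpha$) is $\pu{G_1} \comp \alpha_{\unitD} = \pu{G_2}$. The target proposition (a displayed $2$-cell over $\HH \lwhisker \alpha$) is $\pu{\HH \comp G_1} \comp (\HH \lwhisker \alpha)_{\unit} = \pu{\HH \comp G_2}$. Using the composition formula for displayed morphisms in $\CATUnit$ together with $\pu{\HH} = \id{\unitD}$, I get $\pu{\HH \comp G_i} = \pu{G_i} \comp G_i(\id{\unitD}) = \pu{G_i}$, and the whiskering evaluates pointwise to $(\HH \lwhisker \alpha)_{\unit} = \alpha_{\HH\,\unit} = \alpha_{\unitD}$. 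The two equations therefore coincide, and the logical equivalence is immediate.

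Finally, for the ``Consequently'' part, I would invoke the standard fact that a total functor is fully faithful whenever both its base component and its displayed component are: the base functor $\precomp{\HH}$ is fully faithful (in fact, an adjoint equivalence) by \cref{weq-induces-iso-lemma}, and the displayed component is displayed fully faithful by the argument above. There is no real obstacle here; the entire content is the observation that taking $\pu{\HH}$ to be an identity adds no genuine constraint at the unit layer, so the claim is a matter of definitional unfolding.
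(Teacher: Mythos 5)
Your proposal is correct and follows essentially the same route as the paper: faithfulness comes from the displayed $2$-cells being propositions, and fullness from the fact that $\pu{\HH}$ is an identity, so that the source and target conditions coincide after unfolding. Your version merely spells out the unfolding (and the reduction of the ``Consequently'' part to fully faithfulness of base and displayed components) in more detail than the paper does.
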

\begin{proof}
It is displayed faithful since the type of $2$-cells is a property.
The witness expressing that the weak equivalence preserves the unit is an identity morphism.
Hence, it is displayed full.
\end{proof}

Using the exact same reasoning used in \cref{rezk-completion-tensor-theorem}, we conclude:
\begin{theorem} 
[\coqident{Bicategories.MonoidalCategories.MonoidalRezkCompletion}{precomp_unit_catiso}]
A category equipped with a unit admits a Rezk completion: 
Let $(\EE,\unit_\EE) : \int \CATUnit$. If $\EE$ is univalent, then
\[
\precomp{(\HH,\pu{\HH})} : \mor{\textstyle\int \CATUnit\,}{(\DD,\unitD)}{(\EE,\unit_\EE)} \to \mor{\textstyle\int \CATUnit\,}{(\CC,\unit)}{(\EE,\unit_\EE)}
\]
is an isomorphism of categories.
\end{theorem}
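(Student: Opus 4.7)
The plan is to mirror, step by step, the argument given for \cref{rezk-completion-tensor-theorem}, substituting the unit data for the tensor data throughout. The text already flags that ``the exact same reasoning'' applies, so the main work is verifying that each of the three ingredients used there has a unit-side counterpart.

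First, I would verify that both the source and target hom-categories in $\int \CATUnit$ are univalent. Since $\EE$ is univalent, the functor categories $[\DD,\EE]$ and $[\CC,\EE]$ are univalent. The displayed hom-categories of $\univCATUnit$ are univalent: this is part of the local univalence of $\univCATUnit$ established in \cref{sec:bicat-mono-categ}. Because the total category of a univalent displayed category over a univalent base category is univalent, both $\mor{\int \CATUnit\,}{(\DD,\unitD)}{(\EE,\unit_\EE)}$ and $\mor{\int \CATUnit\,}{(\CC,\unit)}{(\EE,\unit_\EE)}$ are univalent.

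Given univalence on both sides, it suffices to show that $\precomp{(\HH,\pu{\HH})}$ is a weak equivalence, since every weak equivalence between univalent categories is an isomorphism of categories (univalence upgrades essential surjectivity plus fully faithfulness to the object map being an equivalence of types). For fully faithfulness, I would combine \cref{weq-induces-iso-lemma} applied to $\HH$, which makes $\precomp{\HH}:[\DD,\EE]\to[\CC,\EE]$ an adjoint (and hence fully faithful) equivalence, with \cref{precomp-unit-ff-lemma}, which states that the displayed precomposition functor $\precomp{\pu{\HH}}$ is displayed fully faithful; the general fact that fully faithfulness of a total functor follows from fully faithfulness at the base together with displayed fully faithfulness then concludes this step. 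For essential surjectivity, I would use the same \cref{weq-induces-iso-lemma} at the base, combined with \cref{precomp-unit-eso-lemma}, which provides displayed split essential surjectivity of $\precomp{\pu{\HH}}$, and invoke the corresponding lifting lemma for essential surjectivity of total functors (which needs the univalence ensured in the previous paragraph).

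I do not anticipate a substantive obstacle here: the witness of unit-preservation for $\HH$ is the identity morphism $\id{\HH\,\unit}$, so the displayed lemmas \ref{precomp-unit-eso-lemma} and \ref{precomp-unit-ff-lemma} are essentially trivial, and the remaining work is only the routine bookkeeping of threading base-level and displayed-level properties through the total-category constructions. The only point that deserves care is confirming that all prerequisites of the ``total functor is a weak equivalence'' recipe are met, in particular the univalence assumptions used to upgrade displayed split essential surjectivity to essential surjectivity of the total functor; these are exactly the ones checked in the first paragraph.
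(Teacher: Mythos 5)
Your proposal is correct and follows exactly the paper's intended argument: the paper proves this theorem by invoking ``the exact same reasoning'' as in \cref{rezk-completion-tensor-theorem}, i.e., establishing univalence of both hom-categories, reducing to a weak equivalence, and combining the base-level equivalence from \cref{weq-induces-iso-lemma} with the displayed lemmas \ref{precomp-unit-ff-lemma} and \ref{precomp-unit-eso-lemma}, which is precisely what you spell out.
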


So we have proven that objects in $\CATT$ and $\CATUnit$ admit a Rezk completion.
From these results, we conclude that objects in $\CATTU$ admit a Rezk completion: 

\begin{theorem}
[\coqident{Bicategories.MonoidalCategories.MonoidalRezkCompletion}{precomp_tensorunit_catiso}]
\label{rezk-completion-tensorunit-theorem}
Let $(\EE,\tensor_\EE, \unit_\EE) : \CATTU$. 
If $\EE$ is univalent, then
\[
\precomp{(\HH,\pt{\HH},\pu{\HH})} : \mor{\textstyle\int \CATTU\,}{(\DD,\tensorD,\unitD)}{(\EE,\tensor_\EE,\unit_\EE)} \to \mor{\textstyle\int \CATTU\,}{(\CC,\tensor,\unit)}{(\EE,\tensor_\EE, \unit_\EE)}
\]
is an isomorphism of categories, \ie, objects in $\int \CATTU$ admit a Rezk completion.
\end{theorem}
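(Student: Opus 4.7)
The plan is to exploit the definitional decomposition $\CATTU = \CATT \times \CATUnit$ of displayed bicategories over $\CAT$ (fixed just after \cref{disp-bicat-unit-definition}), and to reduce the statement to the already-proved \cref{rezk-completion-tensor-theorem} together with its analogue for units stated just above. In other words, I would argue that the precomposition functor in question is, up to a canonical isomorphism of categories, the product of the two precomposition functors whose being an isomorphism was the content of the two preceding theorems.

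First, I would record the following general fact about products of displayed bicategories: if $\DD_1$ and $\DD_2$ are displayed bicategories over $\BB$, then for any objects $x,y\in\BB$ and any displayed objects $\x_i\in(\DD_i)_x$, $\y_i\in(\DD_i)_y$ the total hom-category $\mor{\int(\DD_1\times\DD_2)}{(x,(\x_1,\x_2))}{(y,(\y_1,\y_2))}$ is isomorphic (indeed, by definition equal once displayed objects/morphisms are unfolded componentwise) to the product $\mor{\int\DD_1}{(x,\x_1)}{(y,\y_1)}\times_{\mor{\BB}{x}{y}}\mor{\int\DD_2}{(x,\x_2)}{(y,\y_2)}$ of the two total hom-categories fibered over $\mor{\BB}{x}{y}$. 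Instantiating this at $\DD_1=\CATT$, $\DD_2=\CATUnit$, and at the two pairs of objects $(\CC,\tensor,\unit),(\EE,\tensor_\EE,\unit_\EE)$ and $(\DD,\tensorD,\unitD),(\EE,\tensor_\EE,\unit_\EE)$, both the source and target of the functor in the statement split as such fibered products.

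Next, I would verify that under these splittings, the precomposition functor $\precomp{(\HH,\pt{\HH},\pu{\HH})}$ corresponds to the pairing of $\precomp{(\HH,\pt{\HH})}$ (with target $(\EE,\tensor_\EE)$) and $\precomp{(\HH,\pu{\HH})}$ (with target $(\EE,\unit_\EE)$) over the common base precomposition functor $\precomp{\HH}:\mor{\CAT}{\DD}{\EE}\to\mor{\CAT}{\CC}{\EE}$. This is immediate from the fact that both composition and left whiskering in a product of displayed bicategories are defined componentwise.

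Finally, the fibered product of two isomorphisms of categories over a common base isomorphism is again an isomorphism. Since \cref{rezk-completion-tensor-theorem} makes $\precomp{(\HH,\pt{\HH})}$ an isomorphism, the theorem just above makes $\precomp{(\HH,\pu{\HH})}$ an isomorphism, and \cref{weq-induces-iso-lemma} makes the base $\precomp{\HH}$ an (adjoint) equivalence, the pairing is an isomorphism as desired. The main obstacle here is purely bookkeeping: one must make the ``displayed hom-category of a product equals product of displayed hom-categories'' identification precise enough that it preserves (and reflects) isomorphism of categories, and check that the specific precomposition functor decomposes coherently. No further category-theoretic content is needed beyond what has already been established.
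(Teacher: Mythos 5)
Your proposal is correct and rests on the same basic idea as the paper's proof---exploit the product decomposition $\CATTU = \CATT \times \CATUnit$ and combine the two component results---but it executes this at the level of total categories, whereas the paper stays displayed. The paper first observes that both hom-categories are univalent (a product of univalent displayed bicategories is univalent), so that by the argument of \cref{rezk-completion-tensor-theorem} it suffices to show the displayed precomposition functor is a displayed weak equivalence; it then notes that this displayed functor is the product of the displayed precomposition functors of $\pt{\HH}$ and $\pu{\HH}$, each a displayed weak equivalence by \cref{precomp-tensor-ff-lemma,precomp-tensor-eso-lemma,precomp-unit-ff-lemma,precomp-unit-eso-lemma}, and that a product of displayed weak equivalences is again one. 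Your route instead identifies the total hom-categories with fibered products over the base hom-categories $\mor{\CAT}{\DD}{\EE}$ and $\mor{\CAT}{\CC}{\EE}$ and pairs the two already-established total isomorphisms. This works, but note one point of care in your last step: for the fibered product of two isomorphisms over a common base to be bijective on objects, you must check that the two preimages of a compatible pair again project to the \emph{same} base object, which requires the base functor $\precomp{\HH}$ to be injective on objects---that is, to be an isomorphism of categories, not merely the ``(adjoint) equivalence'' you cite from \cref{weq-induces-iso-lemma}. This does hold here, since $\mor{\CAT}{\DD}{\EE}$ and $\mor{\CAT}{\CC}{\EE}$ are univalent and an adjoint equivalence between univalent categories is an isomorphism, but it should be invoked explicitly. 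The paper's displayed formulation buys exactly the avoidance of this bookkeeping: the displayed data are indexed over a fixed base morphism, so no compatibility of base images ever needs to be checked.
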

\begin{proof}
The product of univalent displayed bicategories is again univalent.
Thus, both the domain and codomain of this functor are univalent.
Hence, by the same argument as in \cref{rezk-completion-tensor-theorem}, it reduces to proving that the displayed precomposition functor is a displayed weak equivalence.
The displayed precomposition functor is the product of the displayed precomposition functors of $\pt{\HH}$ \resp $\pu{\HH}$.
Since the product of displayed weak equivalences is again a weak equivalence, the result now follows.
\end{proof}

\subsection{The Rezk completion of a category with a tensor, unit, unitors and associator}
\label{sec:rezk-compl-tensor-unit-unitor-associator}
In this \lcnamecref{sec:rezk-compl-tensor-unit-unitor-associator}, we prove that every object in $\int \CATLU$ (\resp $\int \CATRU$ and $\int \CATA$) has a Rezk completion.

As above, we let $\HH : \CC \to \DD$ be a weak equivalence from a category $\CC$ to a univalent category $\DD$, and let $\CC$ be equipped with a tensor $\tensor$ and a unit $\unit$.
The lifted tensor on $\DD$ is denoted by $\tensorD$ and $\unitD := \HH\, \unit$.
The witness that $\HH$ preserves the tensor (\resp unit) (strongly) is denoted by $\pt{\HH}$ (\resp $\pu{\HH} = \id{\HH\, \unit}$).

\begin{remark}
  In all the constructions of this section, we use the lifted tensor $\tensorD$ and unit $\unitD$.
  The specific shape of these lifts does not matter; we could state the constructions for an \textit{arbitrary} Rezk completion of $\CATTU$. However, by univalence we have uniqueness of the tensor and unit on $\DD$ under the proviso that $\HH$ preserves them both.
\end{remark}

Before lifting a left unitor from $\CC$ to $\DD$, we first define a natural isomorphism witnessing that the weak equivalence preserves tensoring with the unit object (on the left):
\begin{lemma}
[\coqident{CategoryTheory.Monoidal.RezkCompletion.LiftedUnitors}{LiftPreservesPretensor}]
\label{lift-preserves-pretensor-lemma}
There is a natural isomorphism $\HH \comp (\unitD \mathop{\tensorD} -) \Rightarrow (\unit \tensor -) \comp \HH$.
\end{lemma}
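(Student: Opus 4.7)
The plan is to obtain the desired natural isomorphism by suitably whiskering the natural isomorphism $\pt{\HH} : (\HH\times\HH)\comp\tensorD \Rightarrow \tensor\comp\HH$ provided by \cref{lifted-tensor-definition}. Concretely, introduce the functor $i_\unit : \CC \to \CC\times\CC$ sending $x \mapsto (\unit, x)$ on objects and $f \mapsto (\id{\unit}, f)$ on morphisms, and left-whisker $\pt{\HH}$ with it. This produces a natural transformation
\[
i_\unit \comp (\HH\times\HH) \comp \tensorD \;\Rightarrow\; i_\unit \comp \tensor \comp \HH \enspace.
\]

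Now observe that the two composites are definitionally the functors appearing in the statement. For the source, at $x : \CC$ we obtain $\tensorD(\HH\unit, \HH x) = \tensorD(\unitD, \HH x) = \bigl(\HH \comp (\unitD \mathop{\tensorD} -)\bigr)(x)$, using $\unitD := \HH\unit$. For the target, we obtain $\HH(\unit\tensor x) = \bigl((\unit \tensor -) \comp \HH\bigr)(x)$. Hence the whiskered transformation is literally of the type claimed, and it is a natural isomorphism because $\pt{\HH}$ is one and whiskering preserves pointwise invertibility. Naturality of the resulting family in $x$ is inherited from naturality of $\pt{\HH}$ in its second argument.

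The hard part here is essentially nil: once the lifted tensor $\tensorD$ and the structural isomorphism $\pt{\HH}$ are in place, all that remains is to specialise the first argument of $\pt{\HH}$ to $\unit$. The fully analogous construction, whiskering instead with $x\mapsto(x,\unit)$, yields the corresponding natural isomorphism $\HH\comp (- \mathop{\tensorD} \unitD) \Rightarrow (- \tensor \unit)\comp \HH$ that will be needed for the right-unitor lift in the remainder of \cref{sec:rezk-compl-tensor-unit-unitor-associator}.
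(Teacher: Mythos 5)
Your proposal is correct and matches the paper's proof: the paper also obtains the isomorphism by whiskering $\pt{\HH}$ with the functor $x \mapsto (\unit, x)$ and identifying the resulting source functor with $\HH \comp (\unitD \mathop{\tensorD} -)$ via the trivial equality coming from $\unitD := \HH\,\unit$ (presented there as a pasting of a trivially commuting square with the $\pt{\HH}$-square). No gaps.
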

\begin{proof}
This is given by the following composition:
\[
\begin{tikzcd}
\CC 
	\arrow[r, "\HH"] 
	\arrow[d,swap, "{(\unit, -)}"] 
& \DD 
	\arrow[d, "{(\unitD,-)}"] \\
{\CC \times \CC} 
	\arrow[r, "{\HH \times \HH}", ""{name=HH, above}] 
	\arrow[d,swap, "\tensor"] 
& {\DD\times\DD} 
	\arrow[d, "\tensorD"] \\
	\CC \arrow[r,swap, "\HH", ""{name=H, below}] 
& \DD
\arrow [Rightarrow , 
	shorten <=5pt,shorten >=5pt,
	"{\pt{\HH}}" swap, 
	from = HH , to = H ]
\end{tikzcd}
\]
where the upper square is given by a trivial equality of functors.
\end{proof}

\begin{definition} 
[\coqident{CategoryTheory.Monoidal.RezkCompletion.LiftedUnitors}{TransportedLeftUnitor}]
Let $\lu$ be a left unitor on $(\CC, \tensor,\unit)$, \ie, $(\CC,\tensor,\unit,\lu) : \int \CATLU$.
The \emph{lifted left unitor} $\luD$ on $(\DD,\tensorD,\unitD)$ is the unique natural isomorphism that maps to the vertical composition of the natural isomorphism (defined in \cref{lift-preserves-pretensor-lemma}) and $\lu \rwhisker \HH$, under the precomposition functor with $\HH$.
\end{definition}

An immediate calculation shows:
\begin{lemma}
[\coqident{CategoryTheory.Monoidal.RezkCompletion.LiftedUnitors}{H_plu}]
\label{weakequiv-preserves-lunitor-lemma}
$\HH$ preserves the left unitor. 
\end{lemma}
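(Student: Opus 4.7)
The plan is to unfold the assertion that $\HH$ preserves the left unitor as a concrete equation, then observe that this equation is, up to trivial identifications, the defining universal property of $\luD$. According to Definition~\ref{disp-bicat-lunit-definition}, preservation of the left unitor by $\HH$ amounts to showing, for every $x : \CC$, that
\[
(\pu{\HH} \tensorD \id{\HH x}) \comp \pt{\HH}_{\unit, x} \comp \HH \lu^\CC_x = \luD_{\HH x}.
\]
Since $\pu{\HH} = \id{\unitD}$ and the tensor is functorial, the leftmost factor $\pu{\HH} \tensorD \id{\HH x}$ collapses to the identity on $\unitD \tensorD \HH x$, so the goal simplifies to
\[
\pt{\HH}_{\unit, x} \comp \HH \lu^\CC_x = \luD_{\HH x}.
\]

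Next I would invoke the defining property of $\luD$: by construction, $\luD$ is the unique natural isomorphism on $\DD$ whose image under $\precomp{\HH}$ equals the vertical composition of the natural isomorphism from Lemma~\ref{lift-preserves-pretensor-lemma} with $\lu \rwhisker \HH$. Evaluated pointwise at $x$, the left-hand side of this defining equation is $(\HH \lwhisker \luD)_x = \luD_{\HH x}$; on the right, the $x$-component of the Lemma~\ref{lift-preserves-pretensor-lemma} transformation is $\pt{\HH}_{\unit, x}$ (its upper square being a trivial equality of functors), and the $x$-component of $\lu \rwhisker \HH$ is $\HH \lu^\CC_x$. Vertical composition of these two components yields $\pt{\HH}_{\unit, x} \comp \HH \lu^\CC_x$, which matches the required right-hand side.

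The main obstacle is not conceptual but rather the careful bookkeeping around the implicit identifications — the trivial equality of functors used in the upper square of Lemma~\ref{lift-preserves-pretensor-lemma} relating $\HH \comp (\unitD \tensorD -)$ to $(\unit \tensor -) \comp \HH$, and the coherences for associating whiskering and vertical composition. In prose these are routinely glossed, but in the formalization one has to thread through them explicitly. Since all such identifications contribute only identity morphisms, the proof ultimately reduces to unfolding the definition of $\luD$ and reading off both sides of the equation.
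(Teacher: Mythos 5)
Your proposal is correct and matches the paper, which dismisses this lemma with ``an immediate calculation'': the calculation in question is precisely your unfolding, namely that $\pu{\HH}\tensorD\id{\HH x}$ collapses to an identity and the remaining equation $\pt{\HH}_{\unit,x}\comp\HH\lu^\CC_x=\luD_{\HH x}$ is the pointwise form of the defining property of $\luD$. Your closing remark about threading through the trivial functor identifications is also an accurate account of where the formalization effort actually lies.
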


\begin{theorem} 
[\coqident{Bicategories.MonoidalCategories.MonoidalRezkCompletion}{precomp_lunitor_catiso}]
\label{thm:catlu-rc}
The objects in $\int \CATLU$ admit a Rezk completion:

Let $(\EE,\tensor_\EE, \unit_\EE,\lu_\EE) : \int \CATLU$. 
If $\EE$ is univalent, then $\precomp{(\HH,\pt{\HH},\pu{\HH},\plu{\HH})}$ of type
\[
\mor{\textstyle\int \CATLU\,}{(\DD,\tensorD,\unitD,\luD)}{(\EE,\tensor_\EE,\unit_\EE,\lu_\EE)} \to \mor{\textstyle\int \CATLU\,}{(\CC,\tensor,\unit,\lu)}{(\EE,\tensor_\EE, \unit_\EE,\lu_\EE)}
\]
is an isomorphism of categories, where $\plu{\HH}$ is a witness that $\HH$ preserves the left unitor (as provided by \cref{weakequiv-preserves-lunitor-lemma}).
\end{theorem}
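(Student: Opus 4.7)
The plan is to mirror the three-step strategy established in \cref{rezk-completion-tensor-theorem,rezk-completion-tensorunit-theorem}. First, I would observe that both the source and target hom-categories are univalent: $\EE$ is univalent so the underlying functor categories are, and \cref{sec:bicat-mono-categ} establishes that $\univCATLU$ is locally univalent, so the displayed layer respects univalence. Hence it suffices, as before, to prove that $\precomp{(\HH,\pt{\HH},\pu{\HH},\plu{\HH})}$ is a weak equivalence of categories. I would then present this functor as a displayed functor $\precomp{\plu{\HH}}$ sitting over $\precomp{(\HH,\pt{\HH},\pu{\HH})}$, which by \cref{rezk-completion-tensorunit-theorem} is already an isomorphism of categories. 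It thus remains to establish that $\precomp{\plu{\HH}}$ is both displayed fully faithful and displayed split essentially surjective.

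Displayed fully faithfulness is essentially automatic: the types of displayed $2$-cells in $\CATLU$ are the unit type, so the induced map on displayed $2$-cell types is trivially a bijection, giving both faithfulness and fullness. The genuine content therefore lies in displayed split essential surjectivity. Concretely, given a $(G,\pt{G},\pu{G}) : (\DD,\tensorD,\unitD) \to (\EE,\tensor_\EE,\unit_\EE)$ together with a witness that the composite $(\HH,\pt{\HH},\pu{\HH}) \comp (G,\pt{G},\pu{G})$ preserves the left unitor $\lu$, I must produce a witness $\plu{G}$ that $G$ itself preserves the lifted left unitor $\luD$.

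The main obstacle lies in this last reduction. Preservation of the left unitor is a pointwise proposition over objects of $\DD$, and $\HH$ is essentially surjective, so it suffices to verify the defining equation at objects of the form $\HH\, c$ for $c : \CC$. Unfolding $\luD$ as the unique natural transformation whose precomposition with $\HH$ equals the pasting of \cref{lift-preserves-pretensor-lemma} with $\lu \rwhisker \HH$, and unpacking the compatibility data $\pt{G}, \pu{G}, \pt{\HH}, \pu{\HH}$ together with \cref{weakequiv-preserves-lunitor-lemma}, the desired equation at $\HH\, c$ reduces, after cancelling the invertible $2$-cells carried by $\HH$, to exactly the hypothesized equation for $\HH \comp G$ at $c$. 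This is a routine but intricate diagram chase; conceptually it is forced by the universal property of $\luD$ inherited from $\HH \times \HH$ being a weak equivalence, but in practice it is the step best delegated to the formalization. Once displayed split essential surjectivity is in hand, combining it with displayed fully faithfulness and the isomorphism at the base level, and then invoking the univalence of both hom-categories, yields the theorem exactly as in \cref{rezk-completion-tensor-theorem,rezk-completion-tensorunit-theorem}.
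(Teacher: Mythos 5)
Your proposal matches the paper's proof: both reduce to showing the displayed precomposition functor is a displayed weak equivalence, get displayed fully faithfulness for free from the unit-type $2$-cells, and obtain displayed split essential surjectivity by noting that preservation of the left unitor is a proposition, so the essential surjectivity of $\HH$ reduces the check to objects $\HH\,c$ where the hypothesis on $\HH \comp G$ applies. The extra detail you give on unfolding $\luD$ is exactly the "routine but intricate" verification the paper delegates to the formalization.
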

\begin{proof}
As before, it reduces to show that the displayed precomposition functor (\cref{disp-precomp-functor-definition}) is a displayed weak equivalence.
It is displayed fully faithful since the type of $2$-cells in $\CATLU$ is the unit type.
We now show that it is displayed split essentially surjective. 
Let $G:\DD\to \EE$ be a lax tensor and unit preserving functor such that $\HH \comp G$ preserves the left unitor. We have to show that $G$ also preserves the left unitor.
Since we have to show a proposition, the claim now follows from combining the essential surjectivity of $\HH$ and then applying the assumption on $\HH \comp G$.
\end{proof}

Completely analogous is the case of right unitor:
\begin{theorem} 
[\coqident{Bicategories.MonoidalCategories.MonoidalRezkCompletion}{precomp_runitor_catiso}]
\label{thm:catru-rc}
The objects in $\int \CATRU$ admit a Rezk completion.
\end{theorem}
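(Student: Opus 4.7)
The plan is to mirror the proof of \cref{thm:catlu-rc} line by line, interchanging the roles of ``left tensoring with the unit'' and ``right tensoring with the unit''. First I would establish the right-handed analogue of \cref{lift-preserves-pretensor-lemma}: a natural isomorphism $\HH \comp (- \mathop{\tensorD} \unitD) \Rightarrow (- \tensor \unit) \comp \HH$, obtained by pasting $\pt{\HH}$ against the trivial square arising from $\unitD = \HH\,\unit$. Using this natural isomorphism together with \cref{weq-induces-iso-lemma} applied to $\HH$, I would define the lifted right unitor $\ruD$ on $(\DD,\tensorD,\unitD)$ as the unique natural isomorphism that is sent, under $\precomp{\HH}$, to the vertical composition of the above natural isomorphism with $\ru \rwhisker \HH$. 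An immediate calculation then yields a witness $\mathrm{pru}^{\HH}$ that $\HH$ preserves the right unitor; this is the counterpart of \cref{weakequiv-preserves-lunitor-lemma}.

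Next I would verify that for any univalent $(\EE,\tensor_\EE,\unit_\EE,\ru_\EE) : \int \CATRU$, both hom-categories in the statement are univalent, using that $\EE$ is univalent together with local univalence of $\CATT$, $\CATUnit$ and $\CATRU$ as established in \cref{sec:bicat-mono-categ}. Hence, as in \cref{rezk-completion-tensor-theorem} and \cref{thm:catlu-rc}, it suffices to show that the displayed precomposition functor over $\precomp{(\HH,\pt{\HH},\pu{\HH})}$ is a displayed weak equivalence.

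Displayed fully faithfulness is immediate because the type of displayed $2$-cells in $\CATRU$ is the unit type, so the displayed functor is trivially displayed fully faithful. For displayed split essential surjectivity I would argue exactly as for the left unitor: given $G : \DD \to \EE$ in $\int \CATTU$ such that $\HH \comp G$ preserves the right unitor, we must show that $G$ itself preserves the right unitor, which is the proposition
\[
\prod_{d : \DD} (\id{G d} \tensor_\EE \pu{G}) \comp \pt{G}_{d,\unitD} \comp G \ruD_{d} = \ru_\EE^{G d}\enspace.
\]
Since this is a proposition (indexed over $\DD$), essential surjectivity of $\HH$ lets us replace arbitrary $d : \DD$ by objects of the form $\HH\, c$ for $c : \CC$, and the hypothesis on $\HH \comp G$ then yields the required equality after unfolding the definitions of $\ruD$ and $\mathrm{pru}^{\HH}$.

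The only step requiring any real care is the first one, namely checking that the equations arising from the definition of $\ruD$ and of $\mathrm{pru}^{\HH}$ fit together to give the desired witness of preservation; this is a coherence calculation exactly symmetric to the left-unitor case and should offer no genuine obstacle.
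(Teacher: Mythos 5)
Your proposal is correct and matches the paper's intent exactly: the paper's own ``proof'' of this theorem is simply the remark that the case of the right unitor is completely analogous to that of the left unitor (\cref{thm:catlu-rc}), and your line-by-line mirroring --- the right-handed analogue of \cref{lift-preserves-pretensor-lemma}, the lifted right unitor via the unique preimage under $\precomp{\HH}$, displayed fully faithfulness from the unit type of $2$-cells, and displayed split essential surjectivity via essential surjectivity of $\HH$ applied to a proposition --- is precisely that analogous argument.
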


In order to prove that every object in $\int \CATA$ has a Rezk completion, 
we use an analogous trick as is used for objects in, \eg, $\int \CATLU$.
An associator for $(\DD, \tensorD)$ is a natural isomorphism between functors of type $(\DD \times \DD)\times \DD \to \DD$. 
Since the product of weak equivalences is again a weak equivalence, such a natural isomorphism corresponds uniquely to a natural isomorphism between functors of type $(\CC \times \CC)\times \CC \to \DD$. 
Analogous to the constructions of the left and right unitor, the natural isomorphism (of type $(\CC \times \CC)\times \CC \to \DD$) is not given by $\ass \rwhisker \HH$ as this does not give us the correct type of functors.
In the case of the left unitor, we only had to provide a natural isomorphism to match the domain, but for the associator, we furthermore need a natural isomorphism to match the codomain.

\begin{theorem} 
[\coqident{Bicategories.MonoidalCategories.MonoidalRezkCompletion}{precomp_associator_catiso}]
\label{thm:cata-rc}
The objects in $\int \CATA$ admit a Rezk completion.
\end{theorem}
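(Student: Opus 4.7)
The plan is to replay, step by step, the strategy used for the left unitor in \cref{thm:catlu-rc}, with the extra twist announced just before the statement: the associator is a $2$-cell between two functors $(\CC \times \CC) \times \CC \to \CC$ whose \emph{codomain} as well as whose \emph{domain} is built from the tensor, so plain right-whiskering by $\HH$ does not deliver a $2$-cell of the correct shape on $\DD$.

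The first step is to define the lifted associator $\assD$. By \cref{weq-induces-iso-lemma} applied to the weak equivalence $(\HH \times \HH) \times \HH : (\CC \times \CC) \times \CC \to (\DD \times \DD) \times \DD$, natural isomorphisms between the two functors $((- \tensorD -) \tensorD -)$ and $(- \tensorD (- \tensorD -))$ of type $(\DD \times \DD) \times \DD \to \DD$ correspond uniquely, via precomposition with $(\HH \times \HH) \times \HH$, to natural isomorphisms between the two resulting functors $(\CC \times \CC) \times \CC \to \DD$. I would therefore first build two ``matching'' natural isomorphisms, one on the domain side and one on the codomain side, each obtained by pasting two copies of $\pt{\HH}$ (analogously to \cref{lift-preserves-pretensor-lemma}, which pastes one copy). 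I would then define $\assD$ as the unique natural isomorphism whose pre-composition with $(\HH \times \HH) \times \HH$ equals the vertical composite of the domain-side matching isomorphism, the whiskered $\ass \rwhisker \HH$, and the inverse of the codomain-side matching isomorphism. A straightforward diagram chase, using the defining equation of $\assD$ together with the naturality and invertibility of $\pt{\HH}$, then yields the analogue of \cref{weakequiv-preserves-lunitor-lemma}: the triple $(\HH, \pt{\HH}, \pu{\HH})$ preserves the lifted associator, providing the required displayed morphism from $\ass$ to $\assD$ in $\CATA$.

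For the final step, I would apply the reasoning of \cref{thm:catlu-rc} essentially verbatim: as in \cref{rezk-completion-tensor-theorem}, the claim reduces to showing that the displayed precomposition functor is a displayed weak equivalence. Displayed fully faithfulness is immediate because the type of displayed $2$-cells in $\CATA$ is the unit type. For displayed split essential surjectivity, I must show that if $G : \DD \to \EE$ already carries tensor- and unit-preservation data and the composite $\HH \comp G$ preserves the associator, then so does $G$; since this is a proposition, I can invoke the essential surjectivity of $(\HH \times \HH) \times \HH$ to reduce it to the assumption on $\HH \comp G$ evaluated at triples of objects coming from $\CC$. The main obstacle I anticipate is not in any of these three steps individually but in the careful bookkeeping of the pasting diagrams defining the two matching natural isomorphisms and verifying the coherence equations they must satisfy: a calculation which is straightforward in principle but notationally heavy, and which is the reason this case is strictly more involved than the unitor cases treated in \cref{thm:catlu-rc,thm:catru-rc}.
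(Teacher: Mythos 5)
Your proposal is correct and follows essentially the same route as the paper, which likewise defines $\assD$ via the unique lifting along the weak equivalence $(\HH\times\HH)\times\HH$ using matching isomorphisms built from $\pt{\HH}$ for both the domain and the codomain (the extra twist relative to the unitor case), and then reduces the theorem to the displayed precomposition functor being displayed fully faithful (trivial $2$-cells) and displayed split essentially surjective (a proposition, checked via essential surjectivity).
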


\subsection{The Rezk completion of a monoidal category}
In this section, we are able to conclude that the objects in $\MONCAT$ and $\MONCATS$ admit a Rezk completion.

In the previous sections, we have lifted all the structure of a monoidal category to a weakly equivalent univalent category.

However, it still remains to show that the lifted structure $(\DD,\tensorD,\unitD,\luD,\ruD,\assD)$ satisfies the properties of a monoidal category if $(\CC, \tensor,\unit,\lu,\ru,\ass)$ does.

\begin{lemma}
[\coqident{CategoryTheory.Monoidal.RezkCompletion.LiftedMonoidal}{TransportedTriangleEq},
\coqident{CategoryTheory.Monoidal.RezkCompletion.LiftedMonoidal}{TransportedPentagonEq}]
The lifted monoidal structure satisfies the pentagon and triangle equalities:
If the triangle (\resp pentagon) equality holds for $(\CC, \tensor,\unit,\lu,\ru,\ass)$,
then it also holds for $(\DD,\tensorD,\unitD,\luD,\ruD,\assD)$.
\end{lemma}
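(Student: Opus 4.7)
Both equations express equalities of morphisms in $\DD$, hence propositions, universally quantified over tuples of objects of $\DD$. The plan is, for each instance, to reduce it to the corresponding equation holding in $\CC$, which is assumed.

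The first step is reduction to $\HH$-image tuples. Both sides of the triangle (\resp pentagon) equation are components of natural transformations between functors $\DD^n \to \DD$ (with $n = 2$, \resp $n = 4$). If two such natural transformations $\alpha,\beta$ agree at every tuple $(\HH\,x_1,\ldots,\HH\,x_n)$, then they agree everywhere: for an arbitrary tuple $(d_1,\ldots,d_n)$, essential surjectivity of $\HH$ gives $c_i : \CC$ and isomorphisms $\varphi_i : \HH\,c_i \cong d_i$ (merely, but the conclusion is a proposition), and naturality of $\alpha$ and $\beta$ together with functoriality of $\tensorD$ transports the equality at $(\HH\,c_1,\ldots,\HH\,c_n)$ to one at $(d_1,\ldots,d_n)$.

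The second step is to prove the equation at a tuple $(\HH\,x_1,\ldots,\HH\,x_n)$. By the defining natural isomorphism of $\tensorD$ (\cref{lifted-tensor-definition}), by $\unitD = \HH\,\unit$ with $\pu{\HH} = \id{\unitD}$, and by the characterizing preservation equations for $\luD$, $\ruD$, $\assD$ (\cref{weakequiv-preserves-lunitor-lemma} and its analogues for the right unitor and associator), every structural cell of $\liftstruct{M}$ evaluated at $\HH$-image arguments can be rewritten in terms of the $\HH$-image of the corresponding cell of $M$ conjugated by instances of the invertible $\pt{\HH}$. Expanding both sides of the triangle (\resp pentagon) for $\liftstruct{M}$ at $(\HH\,x_1,\ldots,\HH\,x_n)$ by this procedure, the invertible instances of $\pt{\HH}$ on the two sides match pairwise and cancel; what remains is exactly $\HH$ applied to the triangle (\resp pentagon) identity in $\CC$ at $(x_1,\ldots,x_n)$, which holds by assumption on $(\CC,\tensor,\unit,\lu,\ru,\ass)$.

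The main obstacle is the bookkeeping in the second step, particularly for the pentagon, where a composite of five structural cells in $\DD$ expands into a diagram with several nested occurrences of $\pt{\HH}$ that must be reassociated (using naturality of $\pt{\HH}$ and the hexagon-like preservation equation for $\assD$) before the matching instances on the two sides can be identified and removed. Conceptually, however, this is just the standard observation that a strong monoidal functor whose underlying functor is faithful and essentially surjective reflects the monoidal coherence axioms from its domain to its codomain, applied to $\HH$.
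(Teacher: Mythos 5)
Your proposal is correct and matches the strategy the paper itself relies on for this step (the paper states this lemma without a written proof, deferring to the formalization, but its neighbouring arguments --- e.g.\ for \cref{precomp-tensor-ff-lemma} and \cref{thm:catlu-rc} --- use exactly your two moves: since the goal is a proposition, essential surjectivity of $\HH$ reduces it to tuples of $\HH$-images, and there the defining/preservation equations for $\tensorD,\unitD,\luD,\ruD,\assD$ let the invertible instances of $\pt{\HH}$ cancel against each other, leaving $\HH$ applied to the corresponding axiom in $\CC$). No gap; only the closing remark should say that $\HH$ \emph{transfers} the coherence axioms forward (needing only essential surjectivity and strong monoidality, not faithfulness), rather than ``reflects'' them.
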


\begin{theorem}
[\coqident{Bicategories.MonoidalCategories.MonoidalRezkCompletion}{precomp_monoidal_catiso}]
\label{rezk-completion-monoidal-theorem}

Any monoidal category admits a Rezk completion (considered in the bicategory of lax monoidal functors).
\end{theorem}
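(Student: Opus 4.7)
The plan is to assemble the layerwise Rezk-completion results into one at the level of $\MONCAT$, using the displayed decomposition of $\MONCAT$ constructed in \cref{sec:bicat-mono-cat-constr}. First I would package the lifted monoidal structure: given $(\CC, M)$ with $M = (\tensor, \unit, \lu, \ru, \ass)$, the preceding subsections produce lifted data $\liftstruct{M} := (\tensorD, \unitD, \luD, \ruD, \assD)$ on $\DD$ together with tensor-, unit-, unitor- and associator-preservation structure on $\HH$. By the lemma immediately preceding this theorem, the triangle and pentagon equations transfer from $(\CC,M)$ to $(\DD,\liftstruct{M})$, so $(\DD,\liftstruct{M})$ is indeed an object of $\MONCAT$, and the collected data turn $\HH$ into a (strong, hence lax) monoidal functor from $(\CC,M)$ to $(\DD,\liftstruct{M})$.

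Next I would reduce the universal property to the displayed isomorphisms already proved. Since $\CATP$ is a \emph{full} displayed sub-bicategory of $\CATUA$, cut out by propositional predicates on objects only, the hom-category $\mor{\MONCAT}{(\DD,\liftstruct{M})}{(\EE,N)}$ coincides with the corresponding hom-category in $\int \CATUA$, and similarly on the $\CC$-side. Hence it suffices to show that the precomposition functor at the level of $\int \CATUA$ is an isomorphism of categories. Since $\CATUA = \CATLU \times \CATRU \times \CATA$ as displayed bicategories over $\int \CATTU$, the displayed precomposition functor on $\CATUA$ decomposes as the product of the three displayed precomposition functors on $\CATLU$, $\CATRU$ and $\CATA$, fibered over the precomposition functor on $\int \CATTU$. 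By \cref{rezk-completion-tensorunit-theorem} the base precomposition is an isomorphism, by \cref{thm:catlu-rc,thm:catru-rc,thm:cata-rc} each displayed component is an isomorphism, and products and fiberings of isomorphisms of categories are again isomorphisms.

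The main obstacle here is bookkeeping rather than genuine mathematical content: one has to chase which displayed precomposition functor lives over which layer of the displayed tower in \cref{fig:constructionmoncat}, confirm that taking the full sub-bicategory $\CATP$ does not alter hom-categories, and verify that both source and target of the final precomposition functor are univalent (which follows, via \cref{total-bicat-univalent-lemma}, from univalence of $\EE$ together with the local univalence of the relevant displayed bicategories established in \cref{sec:bicat-mono-categ}), so that ``isomorphism of categories'' is the correct conclusion to draw from the layerwise weak equivalences.
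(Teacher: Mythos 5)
Your proposal is correct and follows essentially the same route as the paper: assemble the lifted structure, transfer the triangle and pentagon laws, decompose $\CATUA$ into $\CATLU \times \CATRU \times \CATA$ over $\int\CATTU$, and observe that the $\CATP$ layer contributes nothing at the level of morphisms and $2$-cells. The paper phrases this last step as the displayed hom-categories of $\CATP$ being terminal (so the displayed precomposition functor is the identity), which is the same observation as your remark that the full sub-bicategory does not alter hom-categories.
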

\begin{proof}
In \cref{thm:catlu-rc}, \cref{thm:catru-rc} and \cref{thm:cata-rc} we have shown how the categories $\int \CATLU$, $\int \CATRU$ and $\int \CATA$ admit a Rezk completion. 
Hence, $\int (\CATLU \times \CATRU \times \CATA)$ admits a Rezk completion.

Thus, to conclude that the total bicategory of $\CATP$ (over $\int (\CATLU \times \CATRU \times \CATA)$) admits a Rezk completion, it suffices to show that the displayed precomposition functor with respect to $\CATP$ is displayed fully faithful and displayed split essentially surjective. 
The displayed hom-categories of $\CATP$ are the terminal categories.
Hence, the displayed precomposition functor must be the displayed identity functor.
Consequently, this displayed precomposition functor is a weak equivalence.

\end{proof}

\begin{remark}
[\coqident{CategoryTheory.Monoidal.RezkCompletion.MonoidalRezkCompletion}{RezkCompletion_monoidal_cat},\coqident{CategoryTheory.Monoidal.RezkCompletion.MonoidalRezkCompletion}{RezkCompletion_monoidal_functor}]
\label{dfn:weq_transport_mon}
As part of the proof of \cref{rezk-completion-monoidal-theorem}, we have shown how to transfer a monoidal structure along a weak equivalence of categories, provided that the target category is univalent.
More precisely, for any monoidal category $\CC$, univalent category $\DD$, and weak equivalence $\HH : \CC \to \DD$, we construct a  monoidal structure $M$ on $\DD$, and a structure of a (strong) monoidal functor on $\HH$ with respect to $\CC$ and $M$.
\end{remark}

Next, we prove that any monoidal category admits a Rezk completion in the bicategory of strong monoidal functors. Concretely, we show the following theorem:
\begin{theorem}
[\coqident{Bicategories.MonoidalCategories.MonoidalRezkCompletion}{precomp_strongmonoidal_catiso}]
Let $\CC$ be a monoidal category and $\HH : \CC \to \DD$ the Rezk completion of $\CC$ as constructed in \cref{rezk-completion-monoidal-theorem}.
If $\EE$ is a univalent monoidal category, then
\[
\precomp{\HH} : \mor{\MONCATS}{\DD}{\EE} \to \mor{\MONCATS}{\CC}{\EE}
\]
is an isomorphism of categories.

\end{theorem}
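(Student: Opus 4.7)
By \cref{disp-bicat-monstrong-definition}, $\CATS$ is a displayed sub-bicategory of $\CATP$ whose displayed $1$-cells over $(F,\pt{F},\pu{F})$ add only the propositional requirement that $\pu{F}$ and each $\pt{F}_{x,y}$ be invertible, and whose displayed $2$-cells are trivially inherited from $\CATP$. Consequently, the precomposition functor under consideration is a displayed functor over the lax precomposition functor of \cref{rezk-completion-monoidal-theorem}, with propositional displayed object fibres and unit-type displayed morphism fibres. Following the pattern used throughout \cref{sec:rezk-compl-mono}, it then suffices to show that the displayed part is a displayed weak equivalence.

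Displayed fully faithfulness is automatic from the triviality of the displayed $2$-cells. For displayed split essential surjectivity, let $G : \DD \to \EE$ be a lax monoidal functor such that $\HH \comp G$ is strong; we must conclude that $G$ is strong as well. For the unit, $\pu{\HH} = \id{\unitD}$ by construction, so $\pu{\HH \comp G}$ coincides with $\pu{G}$ up to identity morphisms, and invertibility transfers directly. For the tensor, the standard formula for the composition of lax tensor-preservation data expresses $\pt{\HH \comp G}_{c_1,c_2}$ as the composition of $\pt{G}_{\HH c_1,\HH c_2}$ with $G(\pt{\HH}_{c_1,c_2})$; the latter is invertible because $\pt{\HH}$ is so by construction of $\tensorD$ (\cref{lifted-tensor-definition}) and $G$ preserves isomorphisms, hence $\pt{G}$ is invertible on every pair of objects in the image of $\HH \times \HH$.

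The main obstacle is then to propagate invertibility of $\pt{G}_{d_1,d_2}$ from pairs in the image of $\HH \times \HH$ to arbitrary $(d_1, d_2) : \DD \times \DD$. Given such a pair, essential surjectivity of $\HH$ yields, propositionally, objects $c_i : \CC$ together with isomorphisms $\phi_i : d_i \cong \HH c_i$ in $\DD$. The naturality square of $\pt{G}$ at the morphism $(\phi_1, \phi_2)$ relates $\pt{G}_{d_1,d_2}$ to $\pt{G}_{\HH c_1,\HH c_2}$, the two remaining edges being $G\phi_1 \tensor_\EE G\phi_2$ and $G(\phi_1 \tensorD \phi_2)$; these are isomorphisms by functoriality of $G$ and of the tensors. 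The two-out-of-three property of isomorphisms then forces $\pt{G}_{d_1,d_2}$ to be invertible. Since being an isomorphism is a proposition, the merely-existing choice of $(c_i,\phi_i)$ suffices, which completes the displayed split essential surjectivity and thereby the proof.
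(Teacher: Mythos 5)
Your proposal is correct and follows essentially the same route as the paper: reduce to the displayed precomposition functor, obtain displayed fully faithfulness from the triviality of the displayed $2$-cells, and obtain displayed split essential surjectivity by transferring invertibility of the unit- and tensor-preservation data of $\HH \comp G$ back to $G$. Your detailed pointwise argument (two-out-of-three on the composition formula, then propagation from the image of $\HH \times \HH$ to all of $\DD \times \DD$ via essential surjectivity and the propositionality of being an isomorphism) is precisely the content behind the paper's one-line justification that ``the lift of a natural isomorphism is a natural isomorphism.''
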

\begin{proof}
First note that $\HH$ is indeed strong monoidal by the definition of $\pt{\HH}$ and $\pu{\HH}$. Hence, the statement is well-defined.

As before, we have to conclude that the displayed precomposition functor (\cref{disp-precomp-functor-definition}) $\precomp{((\pt{\HH})^{-1},(\pu{\HH})^{-1})}$ is fully faithful and displayed split essentially surjective.

The displayed precomposition functor is fully faithful since every type of displayed $2$-cells in $\MONCATS$ is the unit type.

The displayed precomposition functor is split essentially surjective since the lift of a natural isomorphism is a natural isomorphism.
\end{proof}

\subsection{The Rezk completion of a monoidal category using Day convolution}
\label{sec:day-conv}
A concrete implementation of the Rezk completion of a category $\CC$ is given by restricting the Yoneda embedding to its full image~\cite[Thm.~8.5]{DBLP:journals/mscs/AhrensKS15}.
It is well-known that any monoidal structure on $\CC$ induces a monoidal structure on its category of presheaves $[\CC^{op}, \SET]$~\cite[Prop.~4.1]{IM198675}. 
The tensor product of two presheaves $F,G$ is given by the \textit{Day convolution} $F \tensor_\mathsf{Day} G$. 
Furthermore, the Day convolution of representable presheaves is again representable, \ie, for any two objects $x,y : \CC$, one can construct a natural isomorphism
\[
\mor{\CC}{-}{x} \tensor_\mathsf{Day} \mor{\CC}{-}{y} \cong \mor{\CC}{-}{x \tensor y}\enspace.
\]
Consequently, the Yoneda embedding has the structure of a strong monoidal functor. 
As one would expect, the full subcategory of representable presheaves becomes the monoidal Rezk completion. 
One way to show this result is to show that the universal property of monoidal Rezk completion holds. 
However, we already know that the full subcategory of representable presheaves has a monoidal structure (induced by the monoidal Rezk completion).
Therefore, it suffices to show that the \textit{Rezk monoidal structure} is equal to the \textit{Day monoidal structure}. 

Each piece of data of the \textit{Rezk monoidal structure} is defined using a \textit{universal property} in the sense that it is a unique lifting of some functor or natural transformation. 
For example, the (lifted) tensor product $\tensorD$ is the unique functor satisfying the equation
\[
\tensor \comp \yon 
= (\yon \times \yon) \comp \tensorD\enspace,
\]
where $\yon$ is the Yoneda embedding restricted to its full image, \ie, the concrete weak equivalence.
Using that a category of presheaves is univalent, the Day tensor product also satisfies this equation. Hence, the Day tensor product and the lifted tensor coincide. 
The lifted unit is by definition equal to the unit of the \textit{Day monoidal structure}.
Analogously, one can argue that the Day unitors and associator also satisfy the \textit{universal property} of the lifted unitors \resp associator. 

This shows that, for the concrete implementation of the Rezk completion using representable presheaves, the monoidal Rezk completion is given by the Day convolution.

\begin{remark}
This section has briefly explained what one needs to do in order to work with a specific implementation of the Rezk completion of a category.
Indeed, Let $(\CC, \tensor, \unit, \lu,\ru,\ass)$ be a monoidal category and a \emph{specific} univalent category $\DD$ which is weakly equivalent to $\CC$ as witnessed by $\HH : \CC \to \DD$.
Furthermore, assume we have a functor $\tensorD : \DD\times\DD \to \DD$ and natural isomorphisms $\luD,\ruD$ and $\assD$ which have the types of a left unitor, right unitor and the associator (\wrt $\tensorD$ as the tensor and $\HH\, \unit$ as the unit).

Then, in order to show that $(\DD,\tensorD,\HH\,\unit,\luD,\ruD,\assD)$ is the monoidal Rezk completion, it suffices to show that the pieces of data satisfy the property of the lifted tensor, lifted left unitor, lifted right unitor and the lifted associator.
In particular, one \emph{does not} have to show manually that $(\DD,\tensorD,\HH\,\unit,\luD,\ruD,\assD)$ is a monoidal category, $\HH$ becomes a (strong) monoidal functor and that it satisfies the universal property of the monoidal Rezk completion; this all follows from the argument above.
\end{remark}

\section{Conclusion}
\label{sec:conclusion}

We have studied (the bicategory of) monoidal categories in univalent foundations.
First, we showed that the bicategory of univalent monoidal categories is univalent.
Second, we constructed a Rezk completion for monoidal categories; specifically, we lifted the Rezk completion for categories to the monoidal structure.
Our technique also works for lax and oplax monoidal categories, with minimal modifications.
We have not presented this work here, but the \UniMath code is available online.\footnote{\url{https://github.com/Kfwullaert/UniMath/tree/LaxMonoidalRezkCompletion}}

The second result provides a blueprint for constructing completion operations for ``categories with structure''.
By ``structure'', we mean categorical structure such as functors and natural transformations.
Here, the main challenge is to define a suitable notion of signature that allows us to specify structure on a category.
Such a signature should translate into a suitable ``tower'' of displayed (bi)categories and come with the necessary boilerplate code for using it.
Work on this topic will be reported elsewhere.

\bibliography{literature}

\end{document}